\documentclass{article}
\usepackage{iclr2027_conference,times}
\usepackage{pifont}

\makeatletter
\newcommand{\iclrpreprint}{\iclrfinalcopy\g@addto@macro\maketitle{\lhead{Preprint.}}}
\makeatother

\iclrpreprint

\usepackage[hyphens]{url}
\usepackage{natbib}
\makeatletter
\def\NAT@sort{\@ne}
\def\NAT@make@cite@list{%
  \advance\count@\@ne
  \@safe@activestrue
  \edef\@citeb{\expandafter\@firstofone\@citeb\@empty}%
  \@safe@activesfalse
  \@ifundefined{b@\@citeb\@extra@b@citeb}%
   {\def\NAT@num{A}}%
   {\NAT@parse{\@citeb}}%
  \NAT@ifcat@num\NAT@num
   {\NAT@ifcat@num\NAT@year{\@tempcnta\NAT@year\relax}{\@tempcnta\z@}%
    \multiply\@tempcnta\@m
    \advance\@tempcnta\NAT@num\relax
    \@ifnum{\@tempcnta<\@tempcntb}{%
      \let\NAT@@cite@list=\NAT@cite@list
      \let\NAT@cite@list\@empty
      \begingroup\let\@celt=\NAT@celt\NAT@num@list\endgroup
      \protected@edef\NAT@num@list{%
       \expandafter\NAT@num@celt \NAT@num@list \@gobble @%
      }%
    }{%
      \protected@edef\NAT@num@list{\NAT@num@list \@celt{\number\@tempcnta}}%
      \protected@edef\NAT@cite@list{\NAT@cite@list\@citeb,}%
      \@tempcntb\@tempcnta
    }%
   }%
   {\protected@edef\NAT@nonsort@list{\NAT@nonsort@list\@citeb,}}%
}%
\makeatother

\usepackage{caption}
\usepackage{algorithm}
\usepackage[noend]{algpseudocode}

\usepackage{xspace}
\usepackage{newfloat}
\usepackage{listings}
\usepackage{arydshln}
\usepackage{colortbl}
\usepackage{float}
\DeclareCaptionStyle{ruled}{labelfont=normalfont,labelsep=colon,strut=off}
\floatstyle{ruled}
\newfloat{listing}{tb}{lst}{}
\floatname{listing}{Listing}

\usepackage{graphicx}
\usepackage{wrapfig}
\newlength{\figpair}
\newlength{\figsingle}
\usepackage{subcaption}
\usepackage{booktabs}
\usepackage{multirow}
\usepackage{tabularx}
\usepackage[table]{xcolor}
\usepackage{arydshln}

\definecolor{softgreen}{RGB}{215, 235, 215}
\definecolor{lightgray}{gray}{0.92}
\definecolor{cplx}{RGB}{90,120,158}

\usepackage{amssymb}
\usepackage{mathtools}
\usepackage{amsthm}
\usepackage{nicefrac}

\newcommand{\stok}[0]{\,\nicefrac{\textrm{s}}{\textrm{tok}}}

\theoremstyle{plain}
\newtheorem{theorem}{Theorem}
\newtheorem{proposition}{Proposition}
\newtheorem{lemma}{Lemma}

\theoremstyle{definition}

\theoremstyle{remark}

\makeatletter
\newcommand{\appendixthmnumbering}{%
\@for\@tc:={theorem,proposition,lemma,corollary,definition,assumption,remark,observation}\do{%
  \setcounter{\@tc}{0}%
  \expandafter\xdef\csname the\@tc\endcsname{\noexpand\thesection.\noexpand\arabic{\@tc}}}}
\makeatother

\usepackage[textsize=tiny]{todonotes}

\definecolor{nicecitblue}{rgb}{0.21,0.49,0.74}
\definecolor{hl}{RGB}{225,236,250}
\usepackage[pagebackref,breaklinks,colorlinks,allcolors=nicecitblue]{hyperref}
\usepackage[capitalize,noabbrev]{cleveref}

\crefname{theorem}{Theorem}{Theorems}
\crefname{proposition}{Proposition}{Propositions}
\crefname{lemma}{Lemma}{Lemmas}
\crefname{corollary}{Corollary}{Corollaries}
\crefname{definition}{Definition}{Definitions}
\crefname{assumption}{Assumption}{Assumptions}
\crefname{remark}{Remark}{Remarks}
\crefname{observation}{Observation}{Observations}
\crefname{appendix}{Appendix}{Appendices}
\Crefname{appendix}{Appendix}{Appendices}

\newcommand{\yes}{\textcolor{green!55!black}{\ding{51}}}
\newcommand{\no}{\textcolor{red!70!black}{\ding{55}}}
\newcommand{\dn}{\raisebox{0.15ex}{$\scriptstyle(\downarrow)$}}
\newcommand{\up}{\raisebox{0.15ex}{$\scriptstyle(\uparrow)$}}

\usepackage{tikz}
\newcommand{\circled}[2][0.8]{%
    \tikz[baseline=(char.base), scale=#1]{
        \node[
            circle,
            draw,
            line width=0.6pt,
            inner sep=1.5pt,
            font=\footnotesize\bfseries,
            transform shape,
        ] (char) {#2};
    }%
}

\makeatletter
\def\adl@drawiv#1#2#3{%
        \hskip.5\tabcolsep
        \xleaders#3{#2.5\@tempdimb #1{1}#2.5\@tempdimb}%
                #2\z@ plus1fil minus1fil\relax
        \hskip.5\tabcolsep}
\newcommand{\cdashlinelr}[1]{%
  \noalign{\vskip\aboverulesep
           \global\let\@dashdrawstore\adl@draw
           \global\let\adl@draw\adl@drawiv}
  \cdashline{#1}
  \noalign{\global\let\adl@draw\@dashdrawstore
           \vskip\belowrulesep}}
\makeatother

\let\cite\citep

\title{\centering HEAT: Faster Fully Homomorphic Inference via Approximations-Weights Co-Adaptation}

\author{
    \parbox{\linewidth}{%
    \centering
    Alessandro Zirilli$^{\star, 1}$ \quad Davide Marincione$^{\star, 1}$ \quad Evgenios M. Kornaropoulos$^{2}$  \\ \vspace{.25em}Giuseppe Ateniese$^{2}$\quad Emanuele Rodolà$^{1,3}$
    }%
    \\
    \parbox{\linewidth}{\small \centering
        \vspace{1em}
        $^1$Sapienza University of Rome \quad
        $^2$George Mason University \quad
        $^3$Paradigma
    \\
    [1em]%
    \tt\small \{zirilli, marincione\}@di.uniroma1.it 
    }
}

\begin{document}
\maketitle
\begin{abstract}
    Fully homomorphic encryption (\texttt{FHE}) allows a server to run a language model directly on encrypted user prompts, but current approaches remain prohibitively slow. Ciphertexts natively support only addition, multiplication, and rotation, and multiplications may be composed only to a bounded depth before a costly bootstrapping operation is required to continue. Every nonlinearity must therefore be approximated by an iterative method; each iteration increasing the number of multiplications. A higher iteration count buys precision but exhausts the available depth more frequently and thus triggers more bootstraps, which dominate latency. 
    We introduce \textbf{Homomorphic Encryption-Aware Training (\texttt{HEAT})}, a \emph{fine-tuning} method that makes the per-nonlinearity iteration counts \emph{learnable}, enabling them and the model weights to co-adapt during training. \texttt{HEAT} optimizes iterations with respect to the task objective, allowing the model to adapt to approximation errors encountered during inference without architectural changes or retraining from scratch. We further relate iteration count to quantization bit width and bound, at fixed weights, the gap between our objective and quantization-aware training. On encrypted \texttt{GPT-2} decoding, \texttt{HEAT} reduces iterations by $3.1\times$, bootstraps by $1.6\times$, and end-to-end latency by $1.4\times$, while improving decode agreement over the calibrated encrypted baseline. 
    \begin{center}
    \hspace{0.1cm}\raisebox{-0.2\height}
        {\includegraphics[width=1em,height=1em]{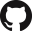}}\hspace{0.1cm} \href{https://github.com/gladia-research-group/heat}{https://github.com/gladia-research-group/heat}
    \end{center}
\end{abstract}

\section{Introduction}\label{sec:intro}
As language models grow in scale, running them locally becomes impractical for most users. Access therefore increasingly takes the form of \textit{inference-as-a-service}, with models hosted on provider-controlled servers and queried remotely. This convenience comes at a privacy cost: prompts and outputs are processed in plaintext and are visible to the provider. Such exposure of users' data is difficult to reconcile with privacy-sensitive domains, where data protection is not only ethically preferable but legally mandated. Fully homomorphic encryption (\texttt{FHE}) protects outsourced language-model inference by enabling encrypted computation on ciphertexts. A client encrypts its prompt, the server evaluates the model without decrypting it, and only the client can recover the output. This is well suited to \textit{inference-as-a-service}, because it preserves a non-interactive workflow (an encrypted request followed by an encrypted response) without requiring mutual trust \cite{Wu2024Ditto:MPC, Li2022MPCFormer:MPC,LuBumbleBee:Transformers, HaoIron:Transformers}.

Running a model under \texttt{FHE}, however, is computationally expensive and suffers from extreme latency. Encrypted computation natively supports only additions, rotations, and multiplications operations, and multiplications can be nested up to a finite maximum \emph{multiplicative depth}, after which expensive \emph{bootstrapping}~\cite{gentry2009fully} is required to restore the multiplicative budget. Moreover, this limited choice of native operations forces nonlinearities to be \emph{approximated by iterative algorithms} that recursively approximate such functions with compositions of addition and multiplication alone, with a precision that increases exponentially with the iteration count. This creates a trade-off between ($i$) approximation precision and ($ii$) inference latency. More iterations produce more accurate estimates, but require more multiplications and thus trigger more bootstrappings, which is the main source of latency during homomorphic inference.
\begin{wrapfigure}{r}{0.50\textwidth}
\centering
    \vspace{-4ex}
    \includegraphics[width=0.50\textwidth]{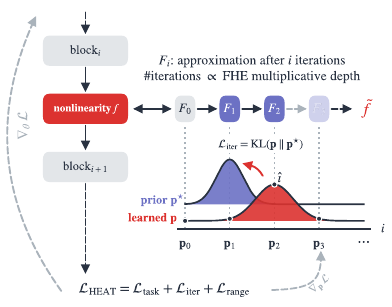}
    \caption{\textbf{\texttt{HEAT} overview.} Each nonlinearity \(f\) is approximated iteratively \(F_0,\ldots,F_n\). During fine-tuning, \texttt{HEAT} learns a distribution \(\mathbf{p}{=}(p_0,\ldots,p_n)\) over iteration counts and evaluates \(\tilde{f}(x){=}\sum_{i=0}^{n} p_i F_i(x)\), regularized toward a prior \(\mathbf{p}^{\star}\) favoring fewer iterations. At deployment, the count is fixed to \(\hat{i}{=}\arg\max_i p_i\), and only \(F_{\hat{i}}(x)\) is evaluated.}
    \label{fig:teaser}
    \vspace{-4ex}
\end{wrapfigure}
For each type of nonlinearity, existing methods \cite{ZhangSecureNon-interactive, JunghoTHOR, yu2026cachemir, Castro2025EncryptedLLM:Encryption, moai} select a single iteration parameterization for the approximation and apply it indiscriminately at every occurrence in the model. Because the chosen approximation must remain accurate for the hardest occurrence, its iteration count is \emph{chosen for the worst case}, which pays for precision it does not need. 
The result is far more bootstraps than the model actually requires. To remedy this, one solution is to tailor the number of iterations to each site. Yet even a per-site adjustment remains suboptimal when each site is held to the same approximation-quality target~\cite{SLOTHE}, since what ultimately matters is the \emph{overall task loss} rather than the per-site accuracy~\cite{LEE2026134016}. Searching the per-layer parameters directly against a task loss deviation proxy~\cite{xie2026atlasautomatedapproximationtransformers} moves the target from the site to the network, but with the weights
frozen, the search can only harvest the tolerance the pretrained model already has.
Alternatively, instead of adjusting approximations, some works redesign or retrain the model around \texttt{FHE} rather than encrypting (and thus approximating) existing architectures unchanged. Some methods train models with polynomial activations~\cite{baruch2023training, ling2025peregrine} or introduce layer architectures that require fewer operations to be approximated~\cite{rho2025encryptionfriendly, ghodsi2020cryptonas, jha2021deepreduce}, while others distill pretrained models into encryption-friendly variants~\cite{park2025powerformer}. However, these approaches alter the model's structure or trigger expensive retraining.
\paragraph{Contribution.} In this work, we introduce {Homomorphic Encryption-Aware Training} (\texttt{HEAT}), a pre-deployment technique that combines per-site iteration selection with fine-tuning (\cref{fig:teaser}). Building on~\citet{banino2021pondernet}, \texttt{HEAT} learns the iteration count at each approximation site jointly with the model's weights. A depth regularization term pushes every site towards fewer iterations, while the task loss retains precision where the model needs it. Because the weights train under the same approximations, they absorb part of the error that shallower solvers introduce, so that each site can be cut further than a fixed model would tolerate. The resulting model is architecturally identical to its plaintext parent but demands substantially less multiplicative depth under \texttt{FHE}.

Nothing guarantees a priori that a network can co-adapt to the error of a shallow iterative solver in place of the exact nonlinearity. We motivate it by relating the error \texttt{HEAT} trains with to the error quantization-aware training (\texttt{QAT}) learns to tolerate. We show that the worst-case error of an $n$-step approximation matches that of a quantizer with bit width $\Theta(2^n)$, \emph{i.e.}\ each iteration roughly doubles the number of correct bits (\Cref{prop:iters-are-bits}). Moreover, for any fixed set of weights, running the network with $n$-step approximants and running it with weights quantized to the matching precision yield losses that differ by a term vanishing doubly exponentially in $n$ (\Cref{thm:heat-is-qat}). At fixed weights, the objective \texttt{HEAT} optimizes, in which every site learns its own count, can thus be read as a per-layer quantization objective, and the learned counts as an allocation of numerical precision across the network layers. Our contributions are threefold:
\begin{itemize}
    \item We introduce \texttt{HEAT}, a fine-tuning method that co-adapts model weights and iteration counts using standard gradient-based optimization. On encrypted \texttt{GPT-2} it reduces the total approximations' iterations by $3.1 \times$.
    
    \item We establish a formal relation between iterative approximation and quantization, and bound, at fixed weights, the difference between the objective under iterative approximations and the \texttt{QAT} objective at the corresponding precision.
    
    \item We evaluate \texttt{HEAT} on teacher forced encrypted decoding against the same model under calibrated approximations. \texttt{HEAT} reduces end-to-end latency from ${\approx}54\stok$ to ${\approx}38\stok$ ($1.4\times$) while raising top-$1$ fidelity to the plaintext model from $70\%$ to $83\%$. 
\end{itemize}
The theoretical results are machine-verified in Lean 4. Code and formal proofs published at \url{https://github.com/gladia-research-group/heat}, model weights available on HuggingFace.
\section{Background}\label{sec:background}
% We introduce the cryptographic and numerical concepts needed for \texttt{HEAT} and its deployment.

\paragraph{Threat Model.}
We have two parties: the client $C$, who wants to prompt the model, and the server $S$, which owns the model. $C$ does not want to reveal the prompt to $S$, and $S$ does not want the model to leave its trust zone\footnote{The weights may be proprietary, in which case $S$ has a commercial interest in keeping them opaque, or open, in which case $S$ may still require that they remain within its infrastructure.}. We operate in the honest-but-curious model for $S$, \emph{i.e.}\, the adversary $S$ follows the protocol but tries to infer information about the privacy-sensitive prompt from the interaction and the computation. We assume the client generates a public--private cryptographic key pair and shares the public key with $S$. The prompt is tokenized and encrypted locally by $C$ before being sent to $S$, so $S$ never sees the plaintext tokens, intermediate activations, outputs, or \textrm{KV} cache, and returns encrypted logits. Additionally, the computation performed by $S$ must be oblivious, \emph{i.e.}\, neither control flow nor memory access patterns can depend on the $C$'s plaintext prompt. Hardware side channels and metadata leakage, such as sequence length, are outside our threat model. %; the latter can be mitigated by padding requests to a fixed length, which we do not assume.

\textbf{Fully Homomorphic Encryption (\texttt{FHE})} allows a server to evaluate arithmetic operations directly on encrypted values. We use \texttt{CKKS}~\cite{CheonHomomorphicNumbers, CheonAEncryption}, an \texttt{FHE} scheme designed for approximate arithmetic over real and complex data. If $c_1 \leftarrow \operatorname{Enc}_{pk}(p_1)$ and $c_2 \leftarrow \operatorname{Enc}_{pk}(p_2)$, then
\begin{equation}\label{eq:encryptedcomputation}
    \operatorname{Dec}_{sk}(c_1 \oplus c_2) \approx p_1+p_2,
    \quad
    \operatorname{Dec}_{sk}(c_1 \otimes c_2) \approx p_1p_2,
    \quad 
    \operatorname{Dec}_{sk}(\operatorname{Rot}_k(c_1)) \approx \pi^k(\mathbf{p}_1),
\end{equation}
where $\oplus$, $\otimes$, and $\operatorname{Rot}_k$ denote encrypted addition, element-wise multiplication, and cyclic slot rotation by $k$ positions, respectively. Multiplications can be composed a finite number of times fixed by the scheme's cryptographic parameters, inducing a quantity known as \emph{multiplicative depth}. Once the available multiplicative depth is exhausted, a \emph{bootstrapping} refreshes the ciphertext and restores its computational capacity~\cite{gentry2009fully}. Bootstrapping costs two to three orders of magnitude more than any other primitive, so multiplicative depth is a primary factor of encrypted-inference latency.

\paragraph{Iterative approximations.} Under \texttt{CKKS}, only additions, multiplications, and rotations can be evaluated directly, so nonlinear operations (division, inverse square root, activation functions\dots) must be approximated with methods exploiting their composition. Common choices include Goldschmidt \cite{goldschmidt_1964} iterations for division, Newton-Raphson iterations for $\nicefrac{1}{\sqrt{x}}$~\cite{Qu2022ImprovementsRoot}, and polynomial approximations based on Chebyshev or Remez fitting. Given a nonlinear function $f(x)$ evaluated at $x$, an iterative method produces increasingly accurate estimates from a seed value $F_0(x)$ that converge to $f(x)$ in the limit
\begin{equation}\label{eq:iterator-limit}
   F_0(x),F_1(x),\ldots,F_n(x)\rightarrow f(x), 
\end{equation}
where $F_i(x)$ is the approximation after $i$ refinement steps and $n$ is the total iteration count. For quadratically convergent methods such as Goldschmidt and Newton-Raphson, the error satisfies
\begin{equation}\label{eq:iterative-convergence}
    \varepsilon_n
    :=
    \sup_{x \in \mathcal{D}} |F_n(x)-f(x)|
    \leq M r^{2^n},
    \qquad M>0,\ r\in(0,1).
\end{equation}
Where $\mathcal{D}$ denotes the set of inputs for which the chosen initialization $F_0(x)$ places the iteration strictly in its convergence interval. The constants $M$ and $r$ depend on the iterative method, its initialization, and the considered domain. Accuracy improves doubly exponentially with $n$, but each additional iteration requires an extra fixed number of multiplications. Therefore, the approximation error decreases rapidly, while its multiplicative depth grows linearly. Practical \texttt{FHE} implementations combine several such methods to balance precision, depth, and 
% ciphertext 
noise~\cite{Qu2022ImprovementsRoot}.

\textbf{Quantization-aware training (\texttt{QAT})} provides the conceptual reference for \texttt{HEAT}. Both expose a model during training to the numerical error it will encounter at deployment, allowing its weights to adapt. Quantization represents weights and, in some settings, activations at lower precision to reduce inference cost. Applying it only after training can degrade accuracy, so \texttt{QAT} inserts low-precision proxies into the forward pass during training or fine-tuning~\cite{jacob2018quantization}. The underlying floating-point parameters are still updated by backpropagation, commonly using a straight-through estimator for non-differentiable rounding. The model therefore learns to tolerate the bounded error introduced by its deployment-time computation.

\section{Method}\label{sec:method}
\texttt{FHE} natively supports only addition, rotation, and multiplication, and multiplications may be composed only a bounded number of times before a costly bootstrapping operation is needed to restore the multiplicative budget. Every other function a model relies on (LayerNorm, Softmax, \dots) must be assembled from these primitives (\Cref{app:approx-operators}). Specifically, during encrypted inference, these functions are replaced by iterative methods that approximate them directly to their components. The parameterization of the iterative methods, and especially the total iterations performed, introduces a trade-off between accuracy and latency. Increasing iterations tends to preserve the model's behavior, producing more accurate approximations, but at the same time means more multiplications and thus more costly bootstrappings. 

Instead of selecting a single iteration parameterization for each approximation or tuning it to a fixed precision target, we propose to model each iteration count as a \emph{parameter} in its own right. We introduce iterative approximation in the plaintext model, during a pre-deployment stage we call \texttt{HEAT}. We fine-tune the iteration counts \emph{jointly with the weights} to meet the final objective, so that the network adapts to the error it will encounter at inference, rather than merely tolerating it. In this manner, each site can be cut further still, since the weights will absorb the error that the iteration reduction introduces.

\subsection{Making iteration selection differentiable}
Let $f$ be a nonlinear function and $F_0,F_1,\ldots,F_n$ the successive results of an iterative scheme converging (in the limit) to $f$, so that $F_i$ is the approximation obtained after $i$ iterations. Substituting $F_i$ for $f$ in the network turns the iteration count $i$ into a \emph{discrete} hyper-parameter that is non-optimizable directly through gradient methods. We make the parameter $i$ differentiable by following \citet{banino2021pondernet}'s adaptive-computation methods. 
Specifically, once the approximations set for future \texttt{FHE} deployment is defined, we implement it in plaintext and, at each approximation site, we introduce a learnable distribution $\mathbf{p}\coloneq (p_0,\ldots,p_n)$ over the iterations, and forward into the next layer the expected approximation under this distribution
\begin{equation}\label{eq:expected-approximation}
\widetilde{f}(x) = \mathbb{E}_{i\sim p}\left[F_i(x)\right] = \sum_{i=0}^{n} p_iF_i(x)\,.
\end{equation}
The loss thereby becomes differentiable in the distribution and in the model weights alike, making the two \emph{co-adapt} during fine-tuning.
Because the $F_i$ are successive states of the same solver, they are all produced in a single execution; during the forward pass we checkpoint only the solvers' inputs and not their intermediate values, which we recompute during the backward, keeping the additional memory costs small (\Cref{app:mem-eff}).
\paragraph{Iteration count penalty.}
To incentivize shallow approximations, for every site $s\in S$, we penalize the Kullback--Leibler (\textrm{KL}) divergence of the learnable distributions $\mathbf{p^{s}}$ from a target negative-binomial prior $\mathbf{p^{s, \star}}$ (whose mode sits at a low $i$) and add it to the task loss
\begin{equation}\label{eq:iteration-loss}
    \mathcal{L}_{\textrm{iter}}
    =
    \frac{1}{|S|}
    \sum_{s\in S}
    \textrm{KL}\!\left(
        \mathbf{p^s}
        \,\middle\|\,
        \mathbf{p^{s,\star}}
    \right),\quad \mathcal{L}_{\texttt{HEAT}} = \mathcal{L}_{\textrm{task}} + \lambda_{\textrm{iter}}\mathcal{L}_{\textrm{iter}} + \lambda_{\textrm{range}}\mathcal{L}_{\textrm{range}}\,.
\end{equation}
The
% resulting 
$\mathcal{L}_\texttt{HEAT}$ introduces a tension between task accuracy and reduction in iterations which, through 
% classical 
gradient descent optimization, finds an equilibrium induced by the coefficients $\lambda_\textrm{range}$ and $\lambda_\textrm{iter}$, where $\mathcal{L}_\textrm{range}$ is the activations' squeeze term, from \citet{zimerman2023converting} and \citet{baruch2023training}.

\paragraph{Adaptive depth target.}\label{sec:adapt}
Applying strong depth pressure from the beginning of training can force the model onto inaccurate approximations before its weights have co-adapted, or driving activations outside iterative methods' convergence domain which results in unstable and slow training. Therefore we initialize each learnable distribution (\Cref{fig:learned-distribution}) such that the expected approximation is heavily biased by high iteration counts' contributions, de facto simulating high precision; but leaving a non-zero tail to let gradient flow through lower iterations' contributions. Likewise, at the onset, the priors' modes are not fixed to encourage extremely coarse approximations, but instead they follow an adaptive back-off that decreases the targets' modes every time the learnable distributions reach it.

\paragraph{Fine-tuning schedule.}\label{sec:heat-schedule}
\texttt{HEAT} is applied to a plaintext pretrained model in three phases. Phase one acts as a warm-up during which we train under $\mathcal{L}_\textrm{task}+\lambda_\textrm{range}\mathcal{L}_\textrm{range}$ alone with no iteration term, reducing activation ranges to prune outlier values, and thus producing a so-called \emph{squeezed} baseline (see~\cref{tab:ablation}). Phase two is the main \texttt{HEAT} functionality: we attach a learnable distribution to every iterative solver's site (\Cref{app:approx-operators}) and optimize with full $\mathcal{L}_{\texttt{HEAT}}$, with the prior's mode adaptively backing off as the learned mode settles on it. Phase three~(\emph{cool-down}) consolidates the result, each approximator is frozen to the mode of its learned distribution $i^{s,\star}{=}\arg\max_i p_i^s$, and the weights are fine-tuned on these \emph{fixed} counts with $\mathcal{L}_\textrm{iter}$ off, consolidating them at the counts to be deployed. At inference, this count is shared by all inputs, as data-dependent execution would leak information through latency, so the encrypted circuit evaluates only $F_{i^{s,\star}}^s(x)$.
\begin{wrapfigure}{r}{0.50\textwidth}
    \centering
    \includegraphics[width=0.50\textwidth]{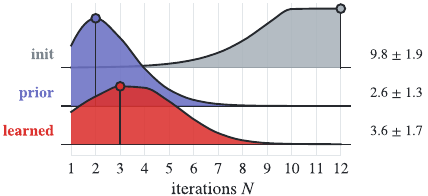}
    \caption{\textbf{Learned distribution.} Evolution of the learnable distribution $\mathbf{p}$ for one \texttt{LayerNorm} approximation. Training shifts probability mass from the full-depth seed toward shallower approximations; deployment selects the learned mode.}
    \label{fig:learned-distribution}
    \vspace{-6ex}
\end{wrapfigure}

\subsection{Approximation Depth as Numerical Precision}
\label{sec:approx-quant}
Neural networks are flexible function approximators that can tolerate a certain degree of noise introduced in the forward stream. A field that has proven this statement widely is quantization, where quantization-aware training (\texttt{QAT}) adapts weights jointly with the task loss and the quantization error. We show that an $n$-step iterative approximation has the same worst-case error as a uniform quantizer whose bit width is set by the iteration count, and that, at fixed weights, a network run with such approximations and the same network run with matched-precision quantized weights have task losses that differ by a term vanishing doubly exponentially in the count. 
More broadly, this analogy lets us study fine-tuning with iterative approximation through the lens of quantization, so that questions about approximation depth become questions about numerical precision.

Consider an $n$-step approximation $F_n$ of a nonlinearity $f$, satisfying \Cref{eq:iterative-convergence}. A uniform $b$-bit quantizer over a range of width $R$ maps elements into representatives taken uniformly with step size $s=R/(2^b-1)$ over $\mathbb{R}$. The representation, for elements in the interval, has a worst-case error $s/2$, and matching this error to $\varepsilon_n$ converts iteration count into an effective bit width. 
%(\Cref{fig:iters-are-bits}).
\begin{proposition}[Iterations are bits]
\label{prop:iters-are-bits}
To obtain the same worst case error from an iterative approximation with a uniform $b$-bit quantizer we need to set:
\begin{equation}
    b
    =
    \log_2\!\left(
        1+\frac{R}{2Mr^{2^n}}
    \right)
    \approx
    2^n\log_2\frac{1}{r}
    +
    \log_2\frac{R}{2M}.
\end{equation}
Hence $b=\Theta(2^n)$, \emph{i.e.}\ each additional iteration approximately doubles the number of correct bits.
\end{proposition}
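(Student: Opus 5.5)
The plan is to equate the two worst-case errors and solve for $b$. I take the iterative side to be exactly the upper bound in \Cref{eq:iterative-convergence}, i.e.\ $\varepsilon_n = M r^{2^n}$. For the quantizer I use the facts recalled just before the statement: with $b$ bits over a range of width $R$, the step is $s = R/(2^b-1)$ and the worst-case rounding error on the interval is $s/2$. Setting $s/2=\varepsilon_n$ gives
\begin{equation*}
\frac{R}{2(2^b-1)} = M r^{2^n}
\quad\Longleftrightarrow\quad
2^b = 1+\frac{R}{2Mr^{2^n}} .
\end{equation*}
Taking $\log_2$ yields the exact expression for $b$ in the statement. Here $b$ is treated as a real-valued effective bit width, so no integrality issue arises.

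Next comes the approximation. Since $r\in(0,1)$, the ratio $R/(2Mr^{2^n})\to\infty$ as $n$ grows. I would write $\log_2(1+y)=\log_2 y+\log_2(1+1/y)$ with $y = R/(2Mr^{2^n})$. The correction term lies in $(0,1/(y\ln 2))$, and $1/y = 2Mr^{2^n}/R$ vanishes doubly exponentially. Expanding
\begin{equation*}
\log_2 y=\log_2\tfrac{R}{2M}-2^n\log_2 r = 2^n\log_2\tfrac1r+\log_2\tfrac{R}{2M}
\end{equation*}
gives the displayed approximation with an explicit error bound.

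For $b=\Theta(2^n)$, the constant $\log_2(1/r)>0$ is fixed, and the additive constant $\log_2(R/2M)$ together with the vanishing correction is $O(1)$. So for $n$ large, $b$ lies between $c_1 2^n$ and $c_2 2^n$; for instance I would take $c_1=\tfrac12\log_2(1/r)$ and $c_2=2\log_2(1/r)$ once $2^n\log_2(1/r)$ dominates the constants. The doubling claim then follows from the ratio $b(n+1)/b(n)\to 2$, or equivalently from $b(n+1)-2b(n)\to -\log_2(R/2M)$, which is a constant.

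The only delicate point is the small-$n$ regime. If $R<2M$ the constant $\log_2(R/2M)$ is negative, and the exact $b$ is always positive, so the $\Theta$ statement must be read asymptotically. I would make that explicit rather than claim uniform constants for every $n\ge 0$.
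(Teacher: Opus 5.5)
Your proposal is correct and follows essentially the same route as the paper's proof. You equate $s/2$ with $\varepsilon_n = Mr^{2^n}$ and solve for $2^b$, then factor out $y = R/(2\varepsilon_n)$ and bound $\log_2(1+1/y)$ using $\ln(1+x)\le x$, concluding $b_n/2^n\to\log_2(1/r)$ and $b_{n+1}/b_n\to 2$. Your small-$n$ caveat is more cautious than necessary: $b_n>0$ for every $n$, and $b_n/2^n$ converges to the positive limit $\log_2(1/r)$, so it is bounded above and below by positive constants uniformly over all $n\ge 0$.
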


% \begin{proof}
% Equate the two worst-case errors and solve for $b$. Details in \Cref{app:resolution}.
% \end{proof}

\emph{Proof} and details in \Cref{app:resolution}. It is worth noting that the two mechanisms inject the error from two different sources. Iterative approximation acts directly on the activations, whereas weight quantization perturbs them indirectly through the linear transformation. However, under bounded activations and Lipschitz nonlinearities, the induced per-layer errors can be matched~(\Cref{app:proofs}). Given that models' blocks are $\rho$-Lipschitz functions, and that per-layer error induced by either quantization or approximations is at most $\varepsilon_n$, by induction, the final activation is perturbed by at most $\varepsilon_n\sum_{j=0}^{L-1}\rho^j$. We can thus bound, at fixed weights, the task-loss difference between the two models:
% forward passes:

\begin{theorem}[Matched-precision objective bound]
\label{thm:heat-is-qat}
Let $\mathcal{L}_{\mathrm{approx}}^{(n)}$ denote the task loss of a network using $n$-step iterative approximations, and let
$\mathcal{L}_{\mathrm{QAT}}^{(b)}$ denote the task loss of the same network, with the same weights, under $b$-bit weight quantization. If the loss is
$\xi_{\mathrm{lip}}$-Lipschitz in the final activations and
$b=b^\star(n)=\Theta(2^n)$  is chosen to match the per-layer error, then
\begin{equation}
    \left|
        \mathcal{L}_{\mathrm{approx}}^{(n)}
        -
        \mathcal{L}_{\mathrm{QAT}}^{(b)}
    \right|
    \leq
    2\xi_{\mathrm{lip}}Mr^{2^n}
    \sum_{j=0}^{L-1}\rho^j
\end{equation}
and the bound vanishes doubly exponentially with $n$.
\end{theorem}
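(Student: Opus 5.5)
The plan is to compare both perturbed networks against a common reference: the exact network with the true nonlinearities $f$ and unquantized weights. We then combine the two deviations with the triangle inequality. Write the network as a composition of $L$ blocks $g_{L-1}\circ\cdots\circ g_0$, each $\rho$-Lipschitz in its input, and let $h_\ell$ denote the exact activation after block $\ell$. Let $h^{\mathrm{approx}}_\ell$ and $h^{\mathrm{QAT}}_\ell$ denote the activations of the network run with $n$-step approximants $F_n$ in place of $f$, and with $b$-bit quantized weights, respectively.

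The first step is a per-layer error statement for each mechanism. For the approximate network, by \Cref{eq:iterative-convergence} each block's output differs from the exact block applied to the same input by at most $\varepsilon_n\le Mr^{2^n}$. This requires the inputs to stay in $\mathcal{D}$, which we take as part of the bounded-activation hypothesis. For the quantized network, the choice $b=b^\star(n)$ from \Cref{prop:iters-are-bits} makes the quantizer's worst-case error $s/2$ equal $Mr^{2^n}$. Under bounded activations the linear map transfers this weight error to an output error of the same order, and we fold the activation-norm factor into the matching that defines $b^\star(n)$. This is the step I expect to be the main obstacle: the weight-space error must be converted into a per-layer activation error bounded by exactly $\varepsilon_n$. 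I would handle it by defining $b^\star(n)$ as the width making the induced per-layer activation error at most $\varepsilon_n$, with the activation and dimension constants absorbed into $R$. This keeps $b^\star(n)=\Theta(2^n)$ by the same computation as in \Cref{prop:iters-are-bits}.

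The second step is the induction. Define $e_\ell=\|h^{\bullet}_\ell-h_\ell\|$ for either perturbed network. Then
\[
e_{\ell+1}\le \|\tilde g_\ell(h^\bullet_\ell)-g_\ell(h^\bullet_\ell)\|+\|g_\ell(h^\bullet_\ell)-g_\ell(h_\ell)\|\le \varepsilon_n+\rho\, e_\ell,
\]
with $e_0=0$. Unrolling gives $e_L\le\varepsilon_n\sum_{j=0}^{L-1}\rho^j$, and this holds for both networks.

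Finally, since the loss is $\xi_{\mathrm{lip}}$-Lipschitz in the final activations, each perturbed loss is within $\xi_{\mathrm{lip}}e_L$ of the exact loss. The triangle inequality then gives
\[
\bigl|\mathcal{L}^{(n)}_{\mathrm{approx}}-\mathcal{L}^{(b)}_{\mathrm{QAT}}\bigr|\le 2\xi_{\mathrm{lip}}Mr^{2^n}\sum_{j=0}^{L-1}\rho^j .
\]
Because $L$, $\rho$, $M$ and $\xi_{\mathrm{lip}}$ do not depend on $n$ and $r\in(0,1)$, the bound is a fixed constant times $r^{2^n}$, so it vanishes doubly exponentially in $n$.
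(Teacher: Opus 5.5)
Your proposal is correct and follows the paper's proof essentially step for step. Both arguments compare each perturbed trajectory to the exact one through the recurrence $\delta_{\ell+1}\le\rho\,\delta_\ell+\varepsilon_n$, join the two losses through the exact network, and settle the quantizer step the way you suggest: the induced per-layer error is bounded by $\lambda\frac{s}{2}dB$, with $B$ bounding the \emph{quantized} network's activations, and $b^\star(n)$ is the smallest integer width with $\lambda\frac{s(b)}{2}dB\le\varepsilon_n$, i.e.\ \Cref{prop:iters-are-bits} with $R$ replaced by $\lambda dBR$. Two details need fixing. The norm must be $\ell_\infty$, so that componentwise errors of $\varepsilon_n$ give a layer error of $\varepsilon_n$. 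Membership of the pre-activations in $\mathcal{D}$ is a separate hypothesis in the paper, not a consequence of bounded activations.
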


\emph{Proof} in \Cref{app:heat-is-qat}. The theorem is stated using a single iteration count (and thus bit-width) across the model for clarity. At deployment, \texttt{HEAT} fixes each site to its learned count $i^{s,\star}$, so the selected iteration depths can be interpreted as a learned allocation of numerical precision across the network. As in \texttt{QAT}, the weights adapt to the numerical error encountered at deployment; unlike conventional quantization, \texttt{HEAT} optimizes multiplicative depth rather than bit width.
\section{Experiments}\label{sec:experiments}

When compiling a network for an \texttt{FHE} pipeline, iterative approximations must be introduced to compensate for the limited (native) operation set supported by encrypted computation (\Cref{app:approx-operators}), and any additional compute required
% arithmetic operations under \texttt{CKKS} is approximated and 
further slows and weighs down the system due to increased bootstraps and memory footprint. At our deployed parameters (\Cref{app:crypto}), a single bootstrap costs two to three orders of magnitude more than any other \texttt{CKKS}~\citep{CheonHomomorphicNumbers, CheonAEncryption} primitive (\Cref{app:tab:primitives}), raising two questions: \emph{How fast does encrypted inference run?} \emph{How well is the original model's performance preserved?}

These questions guide the design of our experimental setup, where we compare a per-site calibrated \texttt{GPT-2} ($124$M) against our \texttt{HEAT} fine-tuned model. All latency and fidelity results of this section are measured under deployed end-to-end encrypted inference (thus accounting for any integration overhead) with \texttt{CKKS} at $128$-bit security on a \texttt{FIDESlib}-based~\citep{agullodomingo2025fideslibfullyfledgedopensourcefhe} backend. Packing follows \texttt{Cachemir}~\citep{yu2026cachemir}, nonlinearities use \texttt{THOR}-style approximations (\cref{app:approx-operators}, \citealt{JunghoTHOR}), $\arg\max$ follows \citet{avitan2025efficientdecodingmethodslanguage}, and the bootstrap schedule is produced by a single shared algorithm for every set of weights. Every experiment runs on a single custom \texttt{A100-64GB}, 8 cores of an Intel Xeon Platinum 8358 and \texttt{128GB} of \texttt{RAM} (\Cref{app:impl}).
\begin{figure}
    \centering
    \begin{subfigure}[t]{\figpair}
      \centering
      \includegraphics[width=\linewidth]{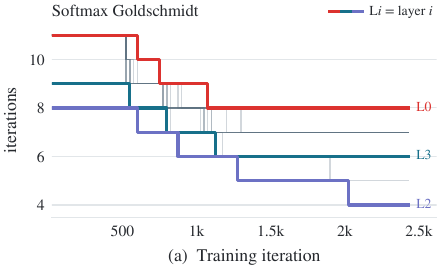}
      \phantomsubcaption\label{fig:softmax-gs-count}
    \end{subfigure}\hfill
    \begin{subfigure}[t]{\figpair}
      \centering
      \includegraphics[width=\linewidth]{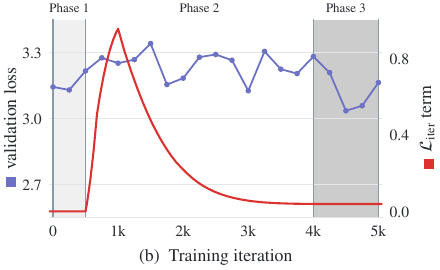}
      \phantomsubcaption\label{fig:training-dynamics}
    \end{subfigure}
    \vspace{-2ex}
    \caption{(a) Iteration counts of the Softmax's Goldschmidt refinement loop across layers during \texttt{HEAT}. The descent is not uniform, each layer settles at the depth its approximation demands. (b) \texttt{HEAT} fine-tuning dynamics: the validation loss stays flat while the per-token iteration budget $\mathcal{L}_\textrm{iter}$ decreases. The $\mathcal{L}_\textrm{iter}$ terms rises as the prior's success probability increases $0.6\!\to\!0.9$ and decays as the distributions follow it. Validation stays roughly constant ($3.0$--$3.3$ band) during fine-tuning.}
    \label{fig:descent}
    \vspace{-4ex}
\end{figure}
\paragraph{Baselines.} \label{sec:calibration}
The accuracy of an encrypted model depends on how well its nonlinearities are approximated, which depends on how their iteration counts are set. Existing methods in the literature choose a single parameterization per type of nonlinearity and apply it at every occurrence in the model. Because this single parameterization must accommodate the \emph{worst-case} site, every other site pays (in computation) for precision it does not need, inflating bootstrapping cost. A per-site calibration mitigates this, and we adopt it as our calibrated baseline; each site receives the smallest iteration count that meets a global precision target $\varepsilon \in \{10^{-i}\}_{i=1,\dots,4}$ on its own input range (over a set of $128$ sequences). We report results with $\varepsilon{=}10^{-4}$, as it empirically returns the best stability-quality trade-off over long sequences. Furthermore, we compare with \texttt{ATLAS}~\citep{xie2026atlasautomatedapproximationtransformers}, which replaces per-site precision target with a network-level objective; we run its search from the calibrated circuit on frozen weights. By contrast, \texttt{HEAT} fine-tunes the model to acquire that tolerance.

\paragraph{\texttt{HEAT} fine-tuning.} We fine-tune with \texttt{HEAT} a \texttt{GPT-2} initially calibrated at $\varepsilon{=}10^{-4}$. The three phases take $7$ \textrm{GPU}-hours in total, processing ${\approx}3.5\times10^{7}$ OpenWebText's tokens (under $1\%$ of \texttt{GPT-2}'s pretraining corpus), thanks to the forward-mode depth gradients described in \Cref{app:mem-eff}, training with iterative approximations induces negligible memory-compute costs. The learnable distributions are initialized as detailed in \Cref{app:init} and the network is trained with AdamW~\citep{adamw}; validation is the mean over $32$ held-out batches every $250$ updates. The full configuration is available in the released code.
\begin{wrapfigure}{r}{0.50\textwidth}
\centering
    \includegraphics[width=0.49\textwidth]{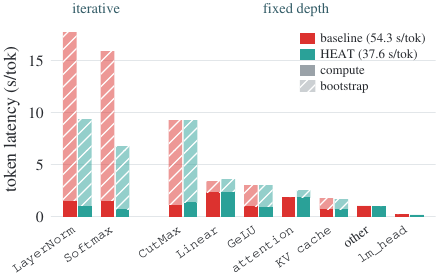}
    \caption{Per-family decode-latency composition, calibrated \texttt{GPT-2} vs.\ \texttt{HEAT}: paired bars in absolute seconds per token, the hatched cap of each bar being the bootstrap time spent inside that family. In the baseline, the iterative nonlinearities own $59\%$ of the token wall and $71\%$ of all bootstrap time; \texttt{HEAT}'s saving concentrates in exactly those bars. \texttt{CutMax} is excluded from optimization. 
    % Where computational time varies is to attribute to different circuits that generate different landing levels, that have different execution times.
    }
    \label{fig:histo-latency-heat}
    \vspace{-2ex}
\end{wrapfigure}
\Cref{fig:softmax-gs-count} shows how iteration counts evolve during fine-tuning, taking Goldschmidt's iterations in the Softmax as an example. Depth is learned layer by layer, settling at different values. \texttt{HEAT} lowers precision only where the weights can absorb it rather than compressing every site uniformly; under the \texttt{HEAT}-\texttt{QAT} analogy, this is a learned mixed-precision allocation. Summed over the network, the learned circuit runs $228$ solver iterations per forward against the calibrated $712$, a $3.1\times$ reduction (\Cref{app:fig:depth-map}). \Cref{fig:training-dynamics} traces the corresponding loss. The $\mathcal{L}_{\textrm{iter}}$ term is held off during phase one warm-up, and it ramps in as the prior anneals to put progressively stronger drag on the counts; throughout, the validation loss stays within the $3.0$--$3.3$ band, \emph{i.e.}\ the model releases iterations without bartering perplexity for them, thanks to co-adaptation. 

Fewer iterations consume fewer levels per forward pass, and fewer levels mean fewer bootstraps; \Cref{fig:histo-latency-heat} breaks down the decode latency by operator family. In the calibrated baseline, the two iterative nonlinearities, LayerNorm and Softmax, account for $59\%$ of the per-token wall time and $71\%$ of all bootstrap time. \texttt{HEAT}'s saving lands exactly on those two bars, roughly halving each, while the fixed-depth families (linear layers, \textrm{GELU}, attention, \textrm{KV} cache) are left untouched. Since bootstrapping inside the iterative methods dominates the encrypted forward pass, this is the cost \texttt{HEAT} is designed to reduce.
\begin{table}
    \centering
    \small
    \setlength{\tabcolsep}{3pt}
    \caption{Encrypted decode at deployment. \emph{it/fwd}: iterations of the iterative approximations per forward; \emph{bts/tok} and \emph{s/token}: bootstraps and wall time per generated token; \emph{fidelity} over $64$ paired OpenWebText sequences of $128$ teacher-forced positions, each model against its own plaintext logits. \emph{Coll.}\ denotes collapsed sequences, \emph{i.e.}\ those in which more than half of the final states diverged during encrypted evaluation. \emph{Top-1} is computed over all sequences, tokens that fail to decrypt counting as misses. \emph{Perplexity} is computed over stable sequences and measured over the $128$ tokens; subscript is the perplexity of the same circuit evaluated in plaintext. $^{\star}$Uses ring dimension $N{=}2^{17}$ and $20$ usable levels, not reproduced by us. $^{\dagger}$Search run from the calibrated circuit; cheapest point within $2\%$ of its plaintext perplexity; its gap under encryption is an encrypted effect (\Cref{app:drift}).}
  \begin{tabular}{@{}>{\columncolor{white}[0pt][\tabcolsep]}l cc cc c c>{\columncolor{white}[\tabcolsep][0pt]}c@{}}
    \toprule
    & \multicolumn{2}{c}{circuit} & \multicolumn{2}{c}{s/token} & \multicolumn{2}{c}{fidelity} & perplexity \\
    \cmidrule(lr){2-3}\cmidrule(lr){4-5}\cmidrule(lr){6-7}\cmidrule(lr){8-8}
    \textbf{System} ($\lambda{=}128$) & it/fwd\dn& bts/tok\dn & Transf.\dn & e2e\dn & coll.\dn & top-1\up & $\texttt{FHE}_{(\textrm{plain})}$\dn \\
    \midrule
    \rowcolor{gray!15}
    \,\texttt{GPT-2} (plaintext) & -- & -- & -- & -- & -- & -- & 32.4 \\
    \cdashlinelr{1-8}
    \,\citet{Castro2025EncryptedLLM:Encryption}$^{\star}$ & -- & -- & 60.0 & ${\approx}68$ & -- & -- & -- \\
    \cdashlinelr{1-8}
    \,\texttt{GPT-2} (calibrated $10^{-4}$) & 712 & 513 & 45.0$_{(\pm0.3)}$ & 54.3$_{(\pm0.4)}$ & 7/64 & 70.3$_{(\pm2.6)}$ & 44.6$_{(31.7)}$ \\
    \,\texttt{ATLAS}~\citep{xie2026atlasautomatedapproximationtransformers}$^{\dagger}$ & 438 & 396 & 34.5$_{(\pm0.3)}$ & 43.8$_{(\pm0.4)}$ & 13/64 & 58.1$_{(\pm3.0)}$ & 50.5$_{(33.1)}$ \\
    \,\textbf{\texttt{HEAT} (ours)} & \textbf{228} & \textbf{326} & \textbf{28.2}$_{(\pm0.2)}$ & \textbf{37.6}$_{(\pm0.3)}$ & \textbf{0/64} & \textbf{82.9}$_{(\pm0.5)}$ & \textbf{30.7}$_{(31.0)}$ \\
    \bottomrule
    \end{tabular}
    \label{tab:main-results}
    \vspace{-3ex}
\end{table}
\Cref{tab:main-results} reports the main latency-quality results. \texttt{HEAT} executes $326$ bootstraps per token against the baseline's $513$ ($1.6\times$ reduction), and serves a token in $37.6\stok$ against the calibrated baseline's $54.3\stok$ ($1.4\times$ reduction) and \texttt{EncryptedLLM}'s~\citep{Castro2025EncryptedLLM:Encryption} ${\approx}68\stok$, whose iteration counts are fixed and uniform across the model. The encrypted $\arg\max$'s~\citep{avitan2025efficientdecodingmethodslanguage} calibration is shared by every approach and contributes the same $9.3\stok$ to each row.
% \Cref{tab:vit-results} repeats the protocol on a \texttt{ViT} for encrypted \texttt{EuroSAT} classification.

Latency is half the deployment question; the other half is whether the shallower circuit remains faithful to its plaintext analog. \Cref{tab:main-results} gives the perplexity of the encrypted inference; here we study in detail how degradation happens along the sequence. \Cref{fig:results} answers this over $64$ teacher-forced chains of $128$ tokens, \emph{i.e.} we test model preservation and stability under encryption. 
\texttt{ATLAS} is near-lossless in plaintext yet collapses on $13$ of $64$ sequences under encryption. This is an encrypted effect that a noise-free search objective cannot see (\Cref{app:drift}).
Along the sequence, the calibrated baseline starts near-exact and degrades. The median \textrm{KL} grows $395\times$ from the first $32$ positions to the last ($0.0013\!\to\!0.519$) and top-$1$ agreement falls from $91\%$ to $47\%$. \texttt{HEAT} starts from a slightly higher floor and grows slower ($0.016\!\to\!0.129$; top-$1$ $90\%\!\to\!77\%$). Pooled over all decoded tokens, top-$1$ agreement is $70.3\%$ for the baseline and $82.9\%$ for \texttt{HEAT}, and over the last $32$ positions \texttt{HEAT}'s \textrm{KL} is $4\times$ lower. The shallower circuit is thus not only faster but more faithful to its plaintext analog than the calibrated one is to its own, thanks to the approximation tolerance it learned and less overall bootstrap noise injection.
\begin{figure}
    \centering
    \includegraphics[width=\linewidth]{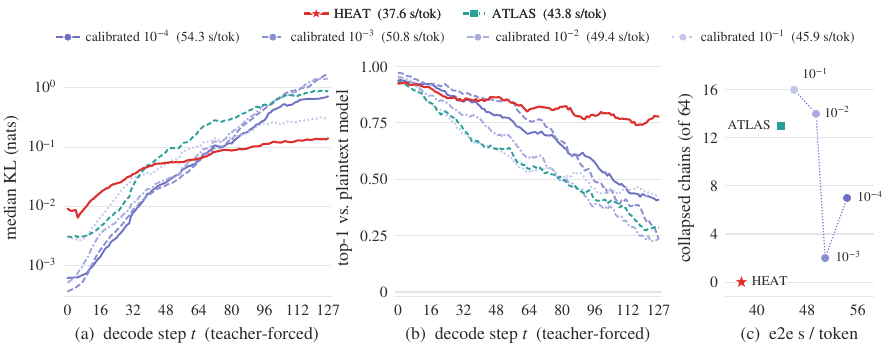}
    \caption{Per-position fidelity of each encrypted circuit against its own plaintext logits over $64$ sequences of $128$ teacher-forced decode steps. (a) Median encrypted \texttt{KL} over the sequences that do not collapse. The calibrated circuits start near-exact and degrade with position, the looser tolerances faster; \texttt{HEAT} starts higher and stays almost flat. (b) Top-1 agreement over all sequences, the calibrated ladder falls to $23$--$42\%$ by $t{=}128$ while \texttt{HEAT} ends at $78\%$. Per-step latency is position-flat for every circuit under the \texttt{Cachemir} layout. (c) Collapsed sequences of $64$ per method, \emph{i.e.}\ sequences where more than $50\%$ of the tokens couldn't be decrypted.}
    \label{fig:results}
    \vspace{-4ex}
\end{figure}
\newcommand{\lrange}{ \circled{1}\xspace}
\newcommand{\depth}{ \circled{2}\xspace}
\newcommand{\counts}{ \circled{3}\xspace}

\paragraph{Ablation study.}\Cref{tab:ablation} separates \texttt{HEAT}'s two losses, and explores two single-component alternatives:\lrange trains the weights with the range loss \emph{$\mathcal{L}_\textrm{range}$ only} and lets the calibrator set the counts afterwards;\depth trains the halting logits on frozen pretrained weights.\counts is the artisan's work, fixing every count by hand and fine-tuning the weights ($397$ is the shallowest circuit this recipe reaches without producing a circuit that always collapses as a starting point). Fine-tuning under $\mathcal{L}_\text{range}$ alone reduces only part of the latency by narrowing the approximation intervals and thus the iteration count needed to meet the precision target, but does not reach the co-adapted circuit \texttt{HEAT} learns. Conversely, transplanting \texttt{HEAT}'s learned counts onto the\lrange \emph{squeezed} and calibrated checkpoints breaks both in plaintext.
% The two collapsing options (\depth learned depths and\counts fixed counts) suffer from the same instability. During fine-tuning, the iteration counts become so small that the error of one approximation pushes the input of the next outside its convergence interval, and the outputs explode. Once a batch produces non-finite samples, it contributes no gradient, so most updates are nullified; the parameters cannot move, and there is no way back out of such a configuration from the task loss only. This is what motivates \texttt{HEAT}'s gradual descent, the prior lowers the counts slowly enough that the weights adapt before any site leaves its convergence domain, and the range loss shrinks the approximations' operative margins so that fewer samples collapse in the first place. It is this combination that lets \texttt{HEAT} decrease the learned depths stably where either component alone cannot.

The two collapsing options,\depth and\counts, fail for different reasons. In\depth the counts descend until samples diverge and carry no gradient, so the task loss offers no way back. The artisan's\counts works in plaintext with counts cut to what it just tolerates; under encryption every bootstrap adds error to every slot, compounding along the sequence until one approximation pushes the next input outside its convergence interval, the outputs explode and the sequence is lost. This is what motivates \texttt{HEAT}'s gradual descent: the prior lowers the counts slowly enough that the weights adapt before any site leaves its convergence domain, and the range loss shrinks the approximations' operative margins so that fewer samples collapse in the first place. It is this combination that lets \texttt{HEAT} decrease the learned depths stably where either component alone cannot.
\begin{table}[h]
\centering
      \small
      \setlength{\tabcolsep}{5pt}
      \caption{Ablation of \texttt{HEAT}'s three components. The range loss alone gives a circuit that replicates its plaintext model but stays deep; learned depth alone, on frozen weights, cuts the iterations by a third (\emph{vs.} the calibrate baseline) but blows up in $62$ of $64$ samples, the weights never having adapted to the shallow circuit. Only the combination is both shallow and stable.}
      \begin{tabular}{@{}l ccc ccc c@{}}
      \toprule
      & $\mathcal{L}_\textrm{range}$ & $\mathcal{L}_\textrm{iter}$ & trained weights & it/fwd\dn & e2e\dn & coll.\dn & perplexity $\texttt{FHE}_{(\textrm{plain})}$ \\
      \midrule
      % \,\texttt{GPT-2} (calibrated $10^{-4}$) & \no  & \no  & \no  & 712 & 54.3 & 7/64  & 44.6$_{(31.7)}$ \\
      % \cdashlinelr{1-8}
      \lrange  squeezed     & \yes & \no  & \yes & 455 & 43.6 & 5/64  & 32.6$_{(32.5)}$ \\
      \depth learned depth only                    & \no  & \yes & \no  & 456 & 46.6 & 62/64 & -- \\
      \counts fixed counts                          & \no  & \no  & \yes & 397 & 43.1 & 58/64 & -- \\
      \cdashlinelr{1-8}
      \,\textbf{\texttt{HEAT} (ours)}         & \yes & \yes & \yes & \textbf{228} & \textbf{37.6} & \textbf{0/64} & \textbf{30.7}$_{(31.0)}$ \\
      \bottomrule
      \end{tabular}
    \label{tab:ablation}
    \vspace{-3ex}
\end{table}
\section{Related Work}
\textbf{\texttt{FHE} frameworks.} \;\;
Private transformer inference is either interactive, with protocols that split the computation between client and server~\citep{Knott2021CrypTen:Learning,HaoIron:Transformers,Li2022MPCFormer:MPC,PangBOLT:Transformers,GuptaSigma:Sharing,LuBumbleBee:Transformers,HouCipherGPT:Inference,kei2025shaft,Dong2023PUMA:Minutes}, or non-interactive under \texttt{FHE}, where the server evaluates the whole model on ciphertext. The latter builds on \texttt{CKKS}~\citep{CheonHomomorphicNumbers,CheonAEncryption} and bootstrapping~\citep{gentry2009fully}; \citet{al2026sok} systematize existing frameworks along a packing axis for linear blocks and a model-preservation axis for nonlinear ones. On the packing axis, \texttt{NEXUS}~\citep{ZhangSecureNon-interactive}, \texttt{MOAI}~\citep{moai}, and \texttt{STIP}~\citep{2026/174stip} cut rotations through column, diagonal, and complex-slot layouts; \texttt{EncryptedLLM}~\citep{Castro2025EncryptedLLM:Encryption} and \texttt{Cachemir}~\citep{yu2026cachemir} target \textrm{GPU}-accelerated decoding, the latter with an encrypted \textrm{KV} cache and the \texttt{THOR}~\citep{JunghoTHOR} nonlinear circuits we also adopt; \textrm{GPU} libraries~\citep{agullodomingo2025fideslibfullyfledgedopensourcefhe} provide the primitives, and \texttt{FHE} compilers~\citep{Viand_2021,FhelipeCompiler,OrionCompiler,cheon2024dacapo} automate packing and bootstrap placement for a given circuit, though most surveyed frameworks still place bootstraps by hand. All of these take the pretrained model as given; \texttt{HEAT} is a fine-tuning that yields a model tolerant of a shallower circuit, and deploys on any of them.

\textbf{\texttt{FHE}-friendly approximations.}\;\;
Since \texttt{CKKS} exposes only additions, multiplications, and rotations, nonlinearities must be rebuilt from them. \citet{al2026sok} split frameworks into \emph{model-preserving} ones, which approximate the original layers, and \emph{model-modifying} ones, which substitute them and retrain. In the first class, division and inverse square root use Goldschmidt~\citep{goldschmidt_1964} and Newton iterations~\citep{Qu2022ImprovementsRoot}, Softmax uses a scaled exponential with an iterative reciprocal~\citep{LeeHETAL:Encryption,JunghoTHOR}, \textrm{GELU} a polynomial composition~\citep{JunghoTHOR}, and the decoding $\arg\max$ iterative sharpening~\citep{avitan2025efficientdecodingmethodslanguage}. In the second, networks are trained with polynomial activations~\citep{Dowlin2016CryptoNets:Accuracy,baruch2023training,zimerman2023converting,ling2025peregrine}, ReLU counts are pruned or searched~\citep{ghodsi2020cryptonas,jha2021deepreduce}, Softmax is replaced by a Gaussian kernel~\citep{rho2025encryptionfriendly} or a power function~\citep{powersoftmax,park2025powerformer}, and normalization is dropped altogether~\citep{Zhu2025TransformersNormalization}.
% Quantization-aware training has likewise been used to shrink integer widths for \texttt{BFV} and \texttt{TFHE}~\citep{legiest2023neural,StoianDeepTFHE} and \texttt{MPC}~\citep{Wu2024Ditto:MPC}. \texttt{HEAT} keeps the model-preserving circuits and the architecture of the first class, but fine-tunes the weights under them as the second does, and reaches plaintext-level fidelity where retraining-based substitutions do not.

\textbf{FHE iterative approximations reduction.}\;\;
Most systems fix one iteration count or polynomial degree per operator type, hand-tuned for the hardest site. \texttt{SLOTHE}~\citep{SLOTHE} approximates only the non-arithmetic sub-parts of each function under a tunable accuracy-latency cost model, \citet{LEE2026134016} choose per-layer polynomial degrees by dynamic programming over each layer's estimated impact on accuracy, \texttt{ELLMo}~\citep{ELLMo} redesigns Softmax and LayerNorm to cut their depth, and \texttt{ATLAS}~\citep{xie2026atlasautomatedapproximationtransformers} searches per-layer iteration counts and degrees against a network-level objective. All operate on frozen weights and can only harvest the tolerance the pretrained model already has.

\section{Conclusions}
Encrypted inference pays for precision it does not need, as every iterative approximation is usually set to a fixed count, and with the weights frozen, that count can only be lowered as far as the pretrained model already tolerates. \texttt{HEAT} lifts this limit by making the counts learnable and training the weights under the approximations they will meet at deployment. On encrypted \texttt{GPT-2} decoding, the learned circuit runs 3.1$\times$ fewer solver iterations, 1.6$\times$ fewer bootstraps, and serves a token in 37.6\,s against 54.3\,s, while its encrypted perplexity stays within 0.3 of plaintext, where the calibrated circuit drifts substantially. Neither learned depth nor weight adaptation can reach this alone. \Cref{prop:iters-are-bits} and \Cref{thm:heat-is-qat} relate iteration count to an effective bit width and bound, at fixed weights, the gap to mixed-precision \texttt{QAT}; this frames learning the counts as a precision allocation and motivates, without guaranteeing, that fine-tuning absorbs the error. The result is a model architecturally identical to its parent that any \texttt{CKKS} pipeline could deploy.

\section*{Acknowledgments}
This work is supported through the MUR FIS2 grant n. FIS-2023-00942 “NEXUS” (cup B53C25001030001). This work was supported by NSF award \#2439951.

\section*{Use of AI}
Large language models were used extensively to write and refactor the training code, the \texttt{CKKS} backend integration, and the Lean~4 formalization. All code was reviewed and run by the authors, every proof was checked by the Lean kernel, and the method, experiments, and analysis are the authors' own, who take full responsibility for the content.

\section*{Ethical Statement}
Lowering the cost of encrypted inference lets users in privacy-sensitive settings query hosted models without disclosing their data. The same property impacts the provider's ability to inspect prompts or outputs, so server-side content filtering, abuse detection, and audit cannot be applied, and a faster encrypted pipeline serves bad actors as readily as hospitals. \texttt{HEAT} does not create this trade-off, which is intrinsic to \texttt{FHE}, but brings it closer to practice. Our threat model also does not hide sequence length or protect the weights; \Cref{app:societal} discusses these points further.

\section*{Reproducibility Statement}
Code, Lean~4 proofs, and all configurations are at \url{https://github.com/gladia-research-group/heat}; weights are at \url{https://huggingface.co/gladia/heat-gpt2-small-openwebtext}. Cryptographic parameters, initialization, and schedule are in \Cref{app:impl}, the circuits in \Cref{app:approx-operators}. Fine-tuning takes 7 \textrm{GPU}-hours on one \texttt{A100}; every latency and encrypted-fidelity number (\Cref{tab:main-results,tab:ablation,tab:vit-results,tab:tolerance-ladder}, \Cref{fig:histo-latency-heat,fig:results}) comes from a deployed \texttt{CKKS} run on our modified \texttt{FIDESlib} build (OpenFHE 1.4.2, CUDA 12.6); the sensitivity, downstream, drift and scaling analyses of \Cref{app:additional-results} are plaintext evaluations of the deployed circuits and are labeled as such.

%\onecolumn
\bibliographystyle{iclr2027_conference}
\bibliography{references}

\newpage
\appendix
\appendixthmnumbering
\crefalias{section}{appendix}
\crefalias{subsection}{appendix}
\crefalias{subsubsection}{appendix}
\appendixthmnumbering
\section{Limitations}
\texttt{HEAT} acts only on the iterative sites; the polynomial \textrm{GELU}, the linear families, and the encrypted $\arg\max$ (a fixed $9.3\stok$) are left at their calibrated depth. We evaluate one language model and one vision transformer on a single \texttt{A100}; whether the gains carry to billion-parameter models or other approximation families is untested. Finally, \texttt{HEAT} needs gradient access and $3.5\times10^{7}$ tokens of fine-tuning.

\section{Proofs: (F)HE-Aware Training and Quantization-Aware Training}\label{app:proofs}
A machine-checked Lean~4 formalization of the proofs in this appendix is provided at \url{https://github.com/gladia-research-group/heat}.

\paragraph{Purpose.} \texttt{HEAT} fine-tunes a network while its nonlinearities are replaced by iterative approximants. On its own this is a new training regime, and nothing says a priori that a network can absorb the error of an iterative method, or that gradient descent should find weights that do. The results of this appendix do not answer that question, but they relate it to a known one, by \emph{analogy} rather than by a new theory.

\Cref{prop:iters-are-bits} shows that the worst-case error of an $n$-step quadratically convergent iterator is that of a uniform quantizer at $b^\star(n)=\Theta(2^n)$ bits, so the iteration count is a precision knob with a known exchange rate to bits. 

\Cref{thm:heat-is-qat} shows that, at fixed weights and matched precision, a network evaluated with iterative approximants and the same network evaluated with quantized weights differ in task loss by an amount that vanishes doubly exponentially in $n$, \emph{i.e.}\ the two error-injection models have per-layer errors within the same budget. 

Training under iterative approximations can therefore be read as quantization-aware training in the currency of multiplicative depth, and learning a per-site iteration count as mixed-precision allocation. What \texttt{HEAT} inherits from this analogy is the framing of \texttt{QAT}, which trains networks to tolerate a bounded per-layer perturbation of matched magnitude, rather than a new hypothesis about what networks tolerate.

\paragraph{Iterative approximators.}
Fix constants in $\mathbb{R}$,  $M>0$, $r\in(0,1)$ and, for $n\in\mathbb{N}$, write
\begin{equation}\label{eq:def-eps}
    \varepsilon_n \;=\; Mr^{2^n}
\end{equation}
to be an iterator's error budget. As defined in \Cref{eq:iterative-convergence}, the $n$-step iterator $F_n$ approximates the nonlinearity $f$ with
\begin{equation}\label{eq:iter-conv}
    |F_n(x)-f(x)|\;\le\;\varepsilon_n\quad\text{for all }x\in\mathcal{D},
\end{equation}
where $\mathcal{D}$ is the set of inputs such that the iterator's seed strictly lies in the convergence interval. Since $0<r<1$, the sequence $\varepsilon_n$ is strictly decreasing and $\varepsilon_n\to0$ doubly exponentially.

\paragraph{Uniform quantizer.} 
A uniform quantizer $Q$ rounds to the nearest of $2^b$ uniformly spaced buckets on an interval $I$ of width $R>0$, at bit width $b\in\mathbb{N}_{\ge1}$ and with step $s(b)=R/(2^b-1)$. For $w\in I$ it holds $|w-Q(w)|\le s/2$; outside $I$ the quantizer saturates and the bound fails, so we assume no overload, \emph{i.e.}\ every entry of every $W_\ell$ lies in $I$. This is always possible with $I=[-\kappa,\kappa]$, since the row-norm bound below gives $|W_{ij}|\le\kappa$. Since $s(b)$ is decreasing in $b$, so is the worst-case rounding error. In classic neural network quantization,  only weights are quantized; activations stay in full precision, so rounding reaches them only through the linear maps (\Cref{lem:act-perturb}).

\paragraph{Network's assumptions and definitions.} We define a layer as a linear map $\mathbb{R}^d \rightarrow \mathbb{R}^{d}$, followed by a nonlinear function applied component wise. Writing $f(u)\in\mathbb{R}^d$ for the entrywise application of $f\colon\mathbb{R}\to\mathbb{R}$ to $u\in\mathbb{R}^d$, the layer with weights $W\in\mathbb{R}^{d\times d}$ and nonlinearity $f$ is the map $h\mapsto f(Wh)$. Activations live in $(\mathbb{R}^d,\|\cdot\|_\infty)$. Given per-layer weights $W_1,\dots,W_L$ and an input $h_0$, we define the layers notation for the two possible error injection models discussed in (\Cref{sec:approx-quant}) the quantization error $\cdot^Q$ and the approximation error  $\cdot^H$:
\[
    h_\ell=f(W_\ell\,h_{\ell-1}), \qquad h^{\mathrm{H}}_\ell=F_n(W_\ell\,h^{\mathrm{H}}_{\ell-1}), \qquad h^{\mathrm{Q}}_\ell=f\bigl(Q(W_\ell)\,h^{\mathrm{Q}}_{\ell-1}\bigr),
\]
Respectively the exact network, the \texttt{HEAT} network at fixed iteration count $n$, and the \texttt{QAT} network at bit width $b$.

We assume $f$ is $\lambda$-Lipschitz, with $\lambda>0$, and every weight matrix has rows of $\ell_1$-norm at most $\kappa$, \emph{i.e.}\ $\max_i\sum_j|W_{ij}|\le\kappa$. Together these make each layer $\rho$-Lipschitz with $\rho=\lambda\kappa$.
We further assume the activations of the \emph{quantized} network satisfy $\|h^{\mathrm{Q}}_\ell\|_\infty\le B$ for all $\ell\le L$ with $B>0$, that the loss is $\xi_{\mathrm{lip}}$-Lipschitz in the final activations, and that every component of the pre-activation of the \texttt{HEAT} network, $(W_\ell\,h^{\mathrm{H}}_{\ell-1})_i$, lies in $\mathcal{D}$, so that \eqref{eq:iter-conv} applies to it component wise. This last assumption is a hypothesis on the input. Calibration in practice does not imply it for every sample, and \Cref{tab:ablation} shows the collapse that follows when it fails.

\subsection{Proof of \texorpdfstring{\Cref{prop:iters-are-bits}}{Proposition~\ref{prop:iters-are-bits}}}\label{app:resolution}
We restate the \cref{prop:iters-are-bits}:
\begin{proposition}
Let $R>0$ be the quantizer range width, let $M>0$ and $r\in(0,1)$ be the real valued constants of \eqref{eq:iter-conv}, and let $n\in\mathbb{N}$. Then $b$ satisfies $s/2=\varepsilon_n$, with $s=R/(2^b-1)$, if and only if
\[
    b \;=\; b_n \;:=\; \log_2\Bigl(1+\tfrac{R}{2Mr^{2^n}}\Bigr),
\]
Consequently $b_{n+1}/b_n\to 2$ as $n\to\infty$.
\end{proposition}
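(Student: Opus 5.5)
The plan has two parts: an algebraic equivalence, and then an asymptotic statement about the ratio $b_{n+1}/b_n$. Throughout, $b$ is treated as a real parameter; the definition $s=R/(2^b-1)$ makes sense for any $b>0$. The positivity facts needed are $M>0$, $R>0$, and $r^{2^n}>0$.

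For the equivalence, I start from $s/2=\varepsilon_n$, which reads $R/(2(2^b-1))=Mr^{2^n}$. Both sides are positive, so this is equivalent to $2^b-1 = R/(2Mr^{2^n})$. Adding $1$ and taking $\log_2$ gives $b = b_n$. Each step is reversible, because $x\mapsto 2^x$ is a bijection $\mathbb{R}\to(0,\infty)$ with inverse $\log_2$, and $1+R/(2Mr^{2^n})>1$, so the logarithm is defined and positive. This yields the "if and only if". In Lean this amounts to rewriting with $\mathrm{Real.rpow}$/$\log_b$ lemmas and discharging positivity side conditions.

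For the asymptotic statement, write $c:=\log_2(1/r)>0$ and $K:=R/(2M)>0$. Then
\[
b_n=\log_2\bigl(1+K r^{-2^n}\bigr)=2^n c+\log_2 K+\log_2\bigl(1+K^{-1}r^{2^n}\bigr).
\]
The last term tends to $0$, since $r^{2^n}\to0$ and $\log_2$ is continuous at $1$. Hence $b_n = 2^n c + \log_2 K + o(1)$, which is exactly the approximation displayed in \Cref{prop:iters-are-bits}.

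Next, dividing numerator and denominator of the ratio by $2^n$ gives
\[
\frac{b_{n+1}}{b_n}=\frac{2c+2^{-n}(\log_2K+o(1))}{c+2^{-n}(\log_2 K+o(1))}.
\]
The numerator tends to $2c$ and the denominator to $c\neq0$, so the ratio tends to $2$. One caveat: $b_n>0$ always, so the ratio is well-defined for every $n$, and no finite exception is needed.

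The main obstacle is making this limit argument clean, especially in a formalized setting. The cleanest route I would try first avoids $o(1)$ bookkeeping through the sandwich
\[
2^n c+\log_2 K \le b_n \le 2^n c+\log_2 K + \log_2\bigl(1+K^{-1}r\bigr).
\]
The lower bound comes from $1+Kr^{-2^n}\ge Kr^{-2^n}$. The upper bound uses $r^{2^n}\le r$ for $n\ge0$. Since the correction term is bounded, $b_n/2^n\to c$ by squeeze, and then
\[
\frac{b_{n+1}}{b_n}=2\cdot\frac{b_{n+1}/2^{n+1}}{b_n/2^n}\to 2\cdot\frac{c}{c}=2
\]
by the quotient rule for limits with nonzero limiting denominator. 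This also gives $b=\Theta(2^n)$ immediately.
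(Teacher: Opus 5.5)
Your proof is correct and follows the paper's route. You equate $s/2$ with $\varepsilon_n$ and solve for $b$, split $b_n$ into $2^n\log_2(1/r)+\log_2(R/(2M))$ plus a nonnegative correction $\log_2(1+2\varepsilon_n/R)$, deduce $b_n/2^n\to\log_2(1/r)>0$, and conclude $b_{n+1}/b_n\to 2$. The only difference is cosmetic: the paper bounds the correction by $2\varepsilon_n/(R\ln 2)$ using $\ln(1+x)\le x$, whereas you bound it by the constant $\log_2(1+K^{-1}r)$ and state the lower bound explicitly, which is all the squeeze needs.
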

\begin{proof}
The quantizer's worst-case error is $s/2=R/\bigl(2(2^b-1)\bigr)$; the iterator's is $\varepsilon_n=Mr^{2^n}$ by \eqref{eq:iter-conv}. Equating the two and solving for $b$,
\[
\frac{R}{2(2^b-1)}=Mr^{2^n} \;\iff\; 2^b=1+\frac{R}{2Mr^{2^n}} \;\iff\; b=\log_2\!\Bigl(1+\frac{R}{2Mr^{2^n}}\Bigr),
\]
For the decomposition, write 
\[
1+\tfrac{R}{2\varepsilon_n}=\tfrac{R}{2\varepsilon_n}\bigl(1+\tfrac{2\varepsilon_n}{R}\bigr)
\]
and take logarithms:
\[
b_n=\log_2\tfrac{R}{2\varepsilon_n}+\log_2\Bigl(1+\tfrac{2\varepsilon_n}{R}\Bigr)=2^n\log_2\tfrac1r+\log_2\tfrac{R}{2M}+\log_2\Bigl(1+\tfrac{2\varepsilon_n}{R}\Bigr)
\]
Since $\ln(1+x)\le x$
\[
\leq2^n\log_2\tfrac1r+\log_2\tfrac{R}{2M}+\tfrac{2\varepsilon_n}{R\ln 2}
\]
Dividing by $2^n$, the last two terms vanish as $n\to\infty$, so $b_n/2^n\to\log_2\tfrac1r>0$; hence $b_n=\Theta(2^n)$ and $b_{n+1}/b_n\to 2$.
\end{proof}

\subsection{Proof of \texorpdfstring{\Cref{thm:heat-is-qat}}{Theorem~\ref{thm:heat-is-qat}}}\label{app:heat-is-qat}
Fix an input, and recall the three trajectories $h_\ell$, $h^{\mathrm{H}}_\ell$, $h^{\mathrm{Q}}_\ell$ of the definitions, and write
\[
\delta^{\mathrm{H}}_\ell=\|h^{\mathrm{H}}_\ell-h_\ell\|_\infty, \qquad \delta^{\mathrm{Q}}_\ell=\|h^{\mathrm{Q}}_\ell-h_\ell\|_\infty.
\]
the $\ell$-th layer difference with the non-approximated model. By \Cref{lem:act-perturb}, weight rounding at bit width $b$ perturbs each output coordinate by at most $\lambda\,\tfrac{s(b)}{2}\,d\,B$. We fix $b$ at the \emph{matched bit width}
\[
b^\star(n)\;:=\;\min\bigl\{b\in\mathbb{N}_{\ge1}:\lambda\,\tfrac{s(b)}{2}\,d\,B\le\varepsilon_n\bigr\},
\]
the smallest integer bit width $b\ge1$ at which the quantizer's per-layer error does not exceed the $n$-step iterator's. Since $\lambda\tfrac{s(b)}{2}dB$ is the worst-case error of a uniform quantizer over a range of width $\lambda dBR$, which is positive by the assumptions above, \Cref{prop:iters-are-bits} applied with $R$ replaced by $\lambda dBR$ gives $b^\star(n)=\lceil b_n(\lambda dBR)\rceil\ge1$, as $b_n>0$. Here $b_n$ is a real-valued effective precision and $b^\star(n)$ its integer ceiling, so the rounding error at $b^\star(n)$ is at most, not equal to, $\varepsilon_n$. The replacement changes only the additive constant $\log_2\tfrac{\lambda dBR}{2M}$, and $b_n\le b^\star(n)<b_n+1$, so $b^\star(n)=\Theta(2^n)$.

The proof of the theorem has four steps. First the exact layers are Lipschitz with an explicit constant (Step~0), each discretization injects a per-layer error of at most $\varepsilon_n$ relative to the exact layer (Steps~1--2), these errors accumulate identically across depth (Step~3), and the loss inherits the resulting activation bound (Step~4).
\begin{theorem}[Matched-precision objective bound,\Cref{thm:heat-is-qat}]
Let $\mathcal{L}_{\mathrm{approx}}^{(n)}$ denote the task loss of a network using $n$-step iterative approximations, and let
$\mathcal{L}_{\mathrm{QAT}}^{(b)}$ denote the task loss of the same network, with the same weights, under matched $b$-bit weight quantization. If the loss is
$\xi_{\mathrm{lip}}$-Lipschitz in the final activations and
$b=b^\star(n)$ is the matched bit width, then
\begin{equation}
    \left|
        \mathcal{L}_{\mathrm{approx}}^{(n)}
        -
        \mathcal{L}_{\mathrm{QAT}}^{(b)}
    \right|
    \leq
    2\xi_{\mathrm{lip}}Mr^{2^n}
    \sum_{j=0}^{L-1}\rho^j.
\end{equation}
The bound vanishes doubly exponentially with $n$.
\end{theorem}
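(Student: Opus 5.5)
The plan is to compare both perturbed trajectories against the exact trajectory $h_\ell$ rather than against each other, and then apply the triangle inequality. The four steps announced above are carried out in that order. The factor $2$ in the bound is the signature of this route: each of $\delta^{\mathrm{H}}_L$ and $\delta^{\mathrm{Q}}_L$ will be bounded by $\varepsilon_n\sum_{j=0}^{L-1}\rho^j$. Then
\[
|h^{\mathrm{H}}_L-h^{\mathrm{Q}}_L|_\infty\le\delta^{\mathrm{H}}_L+\delta^{\mathrm{Q}}_L ,
\]
and composing with the $\xi_{\mathrm{lip}}$-Lipschitz loss gives the claim.

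\textbf{Step 0.} For $u,v\in\mathbb{R}^d$, we have $|f(Wu)-f(Wv)|_\infty\le\lambda|W(u-v)|_\infty\le\lambda\kappa|u-v|_\infty$. This uses that the $\ell_\infty\to\ell_\infty$ operator norm is the maximum row $\ell_1$-norm, which is at most $\kappa$. Hence each exact layer is $\rho$-Lipschitz with $\rho=\lambda\kappa$.

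\textbf{Steps 1--2 (per-layer injection).} For the \texttt{HEAT} trajectory I write
\[
h^{\mathrm{H}}_\ell-h_\ell=\bigl[F_n(W_\ell h^{\mathrm{H}}_{\ell-1})-f(W_\ell h^{\mathrm{H}}_{\ell-1})\bigr]+\bigl[f(W_\ell h^{\mathrm{H}}_{\ell-1})-f(W_\ell h_{\ell-1})\bigr].
\]
By the domain assumption, every pre-activation component lies in $\mathcal{D}$, so \eqref{eq:iter-conv} bounds the first bracket by $\varepsilon_n$ componentwise. Step 0 bounds the second bracket by $\rho\,\delta^{\mathrm{H}}_{\ell-1}$.

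For the \texttt{QAT} trajectory I split analogously:
\[
h^{\mathrm{Q}}_\ell-h_\ell=\bigl[f(Q(W_\ell)h^{\mathrm{Q}}_{\ell-1})-f(W_\ell h^{\mathrm{Q}}_{\ell-1})\bigr]+\bigl[f(W_\ell h^{\mathrm{Q}}_{\ell-1})-f(W_\ell h_{\ell-1})\bigr].
\]
The first bracket is controlled by \Cref{lem:act-perturb}. Under no overload, entrywise rounding error is at most $s(b)/2$, so each coordinate of $(Q(W_\ell)-W_\ell)h^{\mathrm{Q}}_{\ell-1}$ is at most $d\cdot\tfrac{s(b)}{2}\cdot B$, using $|h^{\mathrm{Q}}_{\ell-1}|_\infty\le B$. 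Multiplying by $\lambda$ gives $\lambda\tfrac{s(b)}{2}dB$, which is at most $\varepsilon_n$ by the definition of $b^\star(n)$. The second bracket is again at most $\rho\,\delta^{\mathrm{Q}}_{\ell-1}$.

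Note that the \texttt{QAT} comparison must be anchored at the \emph{quantized} activation $h^{\mathrm{Q}}_{\ell-1}$, not at $h_{\ell-1}$. This is exactly why the activation bound $B$ is assumed for the quantized network.

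\textbf{Step 3 (accumulation).} Both error sequences satisfy $\delta_\ell\le\varepsilon_n+\rho\,\delta_{\ell-1}$ with $\delta_0=0$, since the input is shared. Induction gives $\delta_L\le\varepsilon_n\sum_{j=0}^{L-1}\rho^j$ for both.

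\textbf{Step 4 (loss).} By the Lipschitz property of the loss and the triangle inequality,
\[
|\mathcal{L}_{\mathrm{approx}}^{(n)}-\mathcal{L}_{\mathrm{QAT}}^{(b)}|\le\xi_{\mathrm{lip}}\bigl(\delta^{\mathrm{H}}_L+\delta^{\mathrm{Q}}_L\bigr)\le 2\xi_{\mathrm{lip}}Mr^{2^n}\sum_{j=0}^{L-1}\rho^j .
\]
If the loss is an expectation over inputs, the pointwise bound integrates through. Doubly exponential decay is immediate, because $L$ and $\rho$ are fixed while $r^{2^n}\to0$.

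The main obstacle is Step 2: turning weight rounding into an activation-level error of exactly the same budget $\varepsilon_n$. This needs three things together: the no-overload assumption, so that the $s/2$ bound applies; the uniform bound $B$ on the quantized activations; and the choice $b=b^\star(n)$ via \Cref{prop:iters-are-bits} with $R$ replaced by $\lambda dBR$. The ceiling in $b^\star(n)$ only makes the quantizer error smaller, so the inequality direction is safe. I expect the remaining bookkeeping to be routine.
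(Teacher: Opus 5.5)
Your proposal is correct and follows the paper's proof essentially step for step. Both trajectories are compared to the exact one by inserting the exact layer at the perturbed activation; the quantized comparison is anchored at $h^{\mathrm{Q}}_{\ell-1}$, so the $B$-bound and \Cref{lem:act-perturb} apply at $b=b^\star(n)$. The common recurrence $\delta_\ell\le\rho\,\delta_{\ell-1}+\varepsilon_n$ is unrolled by induction, and the loss gap is closed by the triangle inequality through $h_L$, which is where the factor $2$ comes from.
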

\begin{proof}
\emph{Step 0.} For any $x,y$ and any weight matrix with rows of $\ell_1$-norm at most $\kappa$, coordinate wise
\[
|f((Wx)_i)-f((Wy)_i)| \;\le\;\lambda\Bigl|\sum\nolimits_j W_{ij}(x_j-y_j)\Bigr| \;\le\;\lambda\sum\nolimits_j|W_{ij}|\;\|x-y\|_\infty \;\le\;\lambda\kappa\,\|x-y\|_\infty,
\]
so each exact layer is $\rho$-Lipschitz with $\rho=\lambda\kappa$.

\emph{Step 1 (per-layer error of the iterator).} Every pre-activation of the \texttt{HEAT} network lies strictly in $\mathcal{D}$ by assumption, so applying $F_n$ instead of $f$ to it changes each output coordinate by at most $\varepsilon_n$, directly by
\eqref{eq:iter-conv}.

\emph{Step 2 (per-layer error of the quantizer).} Rounding the weights reaches the activations only indirectly:
\begin{lemma}\label{lem:act-perturb}
Let $w,q_w,h\in\mathbb{R}^d$ with $\|w-q_w\|_\infty\le s/2$ and $\|h\|_\infty\le B$. Then
\[
\bigl|f(\langle w,h\rangle)-f(\langle q_w,h\rangle)\bigr|\;\le\;\lambda\cdot\frac{s}{2}\,d\,B.
\]
\end{lemma}
\begin{proof}[Proof of Lemma B.1]
$|\langle w-q_w,\,h\rangle|\le\sum_{i=1}^{d}|w_i-q_{w,i}|\,|h_i|\le\tfrac{s}{2}\,dB$ by the triangle inequality; the $\lambda$-Lipschitz bound on $f$ gives the claim.
\end{proof}
Applied row-wise, the quantized layer differs from the exact layer, \emph{evaluated on the same $B$-bounded input}, by at most $\lambda\tfrac{s(b)}{2}dB$ in every coordinate, which at $b=b^\star(n)$ is at most $\varepsilon_n$ by definition.

\emph{Step 3 (accumulation across depth).} How the error accumulates across layer of the network can be modeled as 
\begin{lemma}[Trajectory bound]\label{lem:trajectory-bound}
If $\delta_0=0$ and $\delta_{\ell+1}\le\rho\,\delta_\ell+\varepsilon_n$ for all $\ell<L$, then $\delta_\ell\le\varepsilon_n\sum_{j=0}^{\ell-1}\rho^j$ for all $\ell\le L$.
\end{lemma}
\begin{proof}
Induction on $\ell$: the base case is the empty sum, and $\delta_{\ell+1}\le\rho\bigl(\varepsilon_n\sum_{j=0}^{\ell-1}\rho^j\bigr)+\varepsilon_n =\varepsilon_n\sum_{j=0}^{\ell}\rho^j$.
\end{proof}
Both trajectories satisfy this recurrence with $\rho=\lambda\kappa$. For the iterative network, insert the exact layer evaluated at $h^{\mathrm{H}}_\ell$ and split by the triangle inequality:
\[
\delta^{\mathrm{H}}_{\ell+1} \;\le\; \underbrace{\bigl\|F_n(W_{\ell+1}h^{\mathrm{H}}_\ell)-f(W_{\ell+1}h^{\mathrm{H}}_\ell)\bigr\|_\infty}_{\le\,\varepsilon_n\ \text{(Step 1)}} \;+\; \underbrace{\bigl\|f(W_{\ell+1}h^{\mathrm{H}}_\ell)-f(W_{\ell+1}h_\ell)\bigr\|_\infty}_{\le\,\lambda\kappa\,\delta^{\mathrm{H}}_\ell\ \text{(Step 0)}}.
\]
For \texttt{QAT}, insert the exact layer evaluated at $h^{\mathrm{Q}}_\ell$, which is $B$-bounded by assumption:
\[
\delta^{\mathrm{Q}}_{\ell+1} \;\le\; \underbrace{\bigl\|f(Q(W_{\ell+1})h^{\mathrm{Q}}_\ell)-f(W_{\ell+1}h^{\mathrm{Q}}_\ell)\bigr\|_\infty}_{\le\,\varepsilon_n\ \text{(Step 2, matched budget)}} \;+\; \underbrace{\bigl\|f(W_{\ell+1}h^{\mathrm{Q}}_\ell)-f(W_{\ell+1}h_\ell)\bigr\|_\infty}_{\le\,\lambda\kappa\,\delta^{\mathrm{Q}}_\ell\ \text{(Step 0)}}.
\]
Since $\delta^{\mathrm{H}}_0=\delta^{\mathrm{Q}}_0=0$, \Cref{lem:trajectory-bound} yields $\;\delta^{\mathrm{H}}_L,\,\delta^{\mathrm{Q}}_L\le\varepsilon_n\sum_{j=0}^{L-1}(\lambda\kappa)^j$.

\emph{Step 4 (from activations to losses).} The loss is $\xi_{\mathrm{lip}}$-Lipschitz
in the final activations, so, passing through the exact trajectory,
\begin{align*}
|\mathcal{L}_{\mathrm{approx}}-\mathcal{L}_{\mathrm{QAT}}|
&\;\le\;\xi_{\mathrm{lip}}\,\bigl\|h^{\mathrm{H}}_L-h^{\mathrm{Q}}_L\bigr\|_\infty
\;=\;\xi_{\mathrm{lip}}\,\bigl\|(h^{\mathrm{H}}_L-h_L)+(h_L-h^{\mathrm{Q}}_L)\bigr\|_\infty\\
&\;\le\;\xi_{\mathrm{lip}}\Bigl(\bigl\|h^{\mathrm{H}}_L-h_L\bigr\|_\infty+\bigl\|h_L-h^{\mathrm{Q}}_L\bigr\|_\infty\Bigr)
\;=\;\xi_{\mathrm{lip}}\bigl(\delta^{\mathrm{H}}_L+\delta^{\mathrm{Q}}_L\bigr)\\
&\;\le\;\xi_{\mathrm{lip}}\Bigl(\varepsilon_n\sum_{j=0}^{L-1}(\lambda\kappa)^j+\varepsilon_n\sum_{j=0}^{L-1}(\lambda\kappa)^j\Bigr)
\;=\;2\,\xi_{\mathrm{lip}}\,Mr^{2^n}\sum_{j=0}^{L-1}(\lambda\kappa)^j ,
\end{align*}
each trajectory spends one $\varepsilon_n$-budget against the exact network and the bound vanishes doubly exponentially in $n$.
\end{proof}

\paragraph{Scope.} The bound compares task-loss \emph{values} at fixed weights and a fixed count. It does not identify the two perturbations, their gradients, or the optimization trajectories they induce, and it does not cover $\mathcal{L}_{\mathrm{iter}}$ and $\mathcal{L}_{\mathrm{range}}$. Its purpose is to show that training a network under iterative approximations can be related to a well-known field, quantization-aware training, in which training under approximation error is established practice.

\paragraph{Non-entrywise operators.} During the proof we used the fact that the layer is $\lambda\kappa$-Lipschitz (Step~0), and replacing the exact nonlinearity by its deployed approximant perturbs the layer output by at most $\varepsilon_n$ on the calibrated domain (Step~1). 

These two hypotheses do not require the nonlinearity to act entry-wise, \Cref{thm:heat-is-qat} therefore extends to the Softmax and LayerNorm blocks of \texttt{GPT-2}, with the block's own Lipschitz constant in place of $\lambda\kappa$ and its output error in place of $\varepsilon_n$ (the solver error scaled by the rest of the block, \emph{e.g.}\ by $|\gamma_i(x_i-\mu)|$ in LayerNorm), constants we do not compute. These blocks have cross entry operations like the variance computation, or the Softmax denominator division, where the iterator sits inside the operator and its error propagates through the remainder of the block before reaching the next layer.

\paragraph{Per-site counts and fidelity to the exact network.} Two extensions follow from the proof with no new argument. First, the iteration count enters only through the error $\varepsilon_n$ injected at each layer (Step~1), so if site $\ell$ runs $n_\ell$ iterations the recurrence of \Cref{lem:trajectory-bound} reads $\delta_\ell\le\rho\,\delta_{\ell-1}+\varepsilon_{n_\ell}$ and the same induction gives
\[
\delta_L\;\le\;\sum_{\ell=1}^{L}\rho^{\,L-\ell}\,\varepsilon_{n_\ell},
\]
which reduces to $\varepsilon_n\sum_{j<L}\rho^j$ when $n_\ell\equiv n$. Second, Step~3 already bounds the distance of each perturbed trajectory from the \emph{exact} one, so Step~4 applied to $\delta^{\mathrm{H}}_L$ alone yields the fidelity bound, and applied to both trajectories the per-site form of the theorem:
\begin{align}
\bigl|\mathcal{L}_{\mathrm{approx}}^{(n_1,\dots,n_L)}-\mathcal{L}_{\mathrm{exact}}\bigr|
&\;\le\;\xi_{\mathrm{lip}}\sum_{\ell=1}^{L}\rho^{\,L-\ell}\,Mr^{2^{n_\ell}},
\label{eq:fidelity}\\
\bigl|\mathcal{L}_{\mathrm{approx}}^{(n_1,\dots,n_L)}-\mathcal{L}_{\mathrm{QAT}}^{(b^\star(n_1),\dots,b^\star(n_L))}\bigr|
&\;\le\;2\,\xi_{\mathrm{lip}}\sum_{\ell=1}^{L}\rho^{\,L-\ell}\,Mr^{2^{n_\ell}} .
\label{eq:mixed-precision}
\end{align}
\Cref{eq:fidelity} bounds the \emph{analytic-network} error, exact nonlinearities against $n_\ell$-step solvers, in plaintext and at fixed weights. It is not the quantity \Cref{sec:experiments} reports. \Cref{tab:main-results} and \Cref{fig:results} compare each encrypted run with a plaintext reference on the same weights, a gap that includes \texttt{CKKS} and bootstrapping noise, which no iteration count removes; top-1 agreement additionally requires a logit margin larger than twice the activation error. \Cref{eq:mixed-precision} says that a network with site-specific iteration counts is within this bound of a mixed-precision quantized network with bit widths $b^\star(n_\ell)$, which is the sense in which learning the counts is a precision allocation.
\section{Implementation Details}\label{app:impl}

\subsection{Cryptographic settings}\label{app:crypto} All runs use the \texttt{RNS} variant of \texttt{CKKS}~\citep{CheonAEncryption} at $128$-bit security (enforced by \texttt{OpenFHE}). We instantiate a cyclotomic ring of dimension $N{=}2^{16}$ ($2^{15}$ slots) with a $60$-bit first modulus and a uniform tower layout, every rescaling prime is $53$ bits, pinning the scaling factor at $\Delta{=}2^{53}$ on every level of a bootstrappable modulus chain of $28$ \texttt{RNS} limbs, for a total multiplicative depth of $27$. A bootstrap returns ciphertexts at level $16$, leaving $11$ levels for the arithmetic between refreshes, however only $8$ are usable in practice. Key switching is hybrid with $\mathrm{dnum}{=}7$ (four special primes), and we use a dense uniform ternary secret with sparse-secret encapsulation for bootstrapping (a transient Hamming-weight-$32$ key). Bootstrapping runs a single iteration with a $(4,3)$ CoeffsToSlots/SlotsToCoeffs level budget; measured over uniform inputs of increasing magnitude, the refresh keeps the worst-case relative error below $2^{-10}$ for $|x|\in[1.38,246]$ and below $10^{-4}$ (${\approx}13$ bits) in the inner band $|x|\in[12.85,76.4]$, $10$--$13$ bits depending on the magnitude, from which we estimated the iteration calibration target.
\begin{figure}[t]
    \centering
    \includegraphics[width=\figsingle]{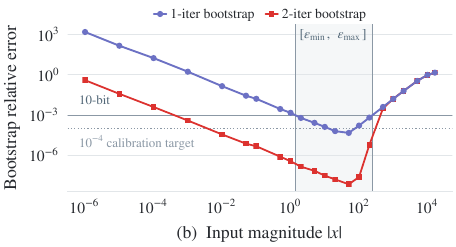}
    \caption{Bootstrap precision at different magnitude of the underlying plaintext with fixed correction factor $7$, \texttt{FIDESlib}'s default.}
    \label{fig:bts-precision}
\end{figure}
\paragraph{Primitive costs and precision.} \Cref{app:tab:primitives} is the cost model behind the claim that multiplicative depth drives latency. On a single custom \texttt{A100-64GB} \textrm{GPU}, bootstrap dominates every other primitive by two to three orders of magnitude, so a model that tolerates a shallower circuit buys its latency almost entirely through the refreshes it avoids. We time each \texttt{CKKS} primitive in isolation, $100$ repetitions each, at the deployment parameters above, on the machine used for every end-to-end run: one custom \texttt{NVIDIA A100-64GB}, an \texttt{Intel Xeon Platinum 8358} ($8$ cores and $128$\,GB of \texttt{RAM}), \texttt{CUDA}~$12.6$, \texttt{OpenFHE}~$1.4.2$, and a modified \texttt{FIDESlib} build. We also report the precision of the bootstrap operation over random sampled inputs against their magnitude \cref{fig:bts-precision}.
\begin{table}[t]
  \centering
  \small
  \setlength{\tabcolsep}{7pt}
    \caption{Each \texttt{CKKS} primitive timed in isolation on one custom \texttt{A100-64GB} at the deployment parameters of \Cref{app:crypto} above, $100$ repetitions each. \emph{levels} is the multiplicative depth the operation consumes; the bootstrap row reads $-8$ as it returns a ciphertext from the refresh trigger at level $24$ to level $16$ (due to instability of deeper levels). Arithmetic is timed at that same input level $16$.}
  \begin{tabular}{@{}l r rr@{}}
  \toprule
  \textbf{primitive} & \textbf{levels} & \textbf{mean} (ms) & \textbf{min} (ms) \\
  \midrule
  add ($\mathsf{ct}+\mathsf{ct}$)          & $0$  & $0.047$  & $0.045$  \\
  sub ($\mathsf{ct}-\mathsf{ct}$)          & $0$  & $0.047$  & $0.045$  \\
  mult ($\mathsf{ct}\times\mathsf{ptx}$)   & $1$  & $0.066$  & $0.042$  \\
  mult ($\mathsf{ct}\times\mathsf{ct}$)    & $1$  & $0.580$  & $0.559$ \\
  square ($\mathsf{ct}^2$)                 & $1$  & $0.566$  & $0.544$  \\
  rotate (one key-switch)                  & $0$  & $0.575$  & $0.565$ \\
  \midrule
  \textbf{bootstrap}                       & $\mathbf{-8}$ & $\mathbf{82.1}$ & $81.9$ \\
  \bottomrule
  \end{tabular}
  \vspace{1ex}
  \label{app:tab:primitives}
\end{table}
\subsection{Memory-efficient depth gradients} \label{app:mem-eff}
The expected output $\widetilde{f}_l(x)=\sum_i p_i^l F_i^l(x)$ of \Cref{sec:method} would, under reverse-mode autodiff, retain every intermediate solver state $F_i^l$ for the backward pass, so activation memory grows as $O(n_l)$ and the added memory traffic costs latency. Checkpointing removes the memory cost but pays for it with a recomputed forward pass \emph{and} a backward sweep over it. Since no gradient is accumulated at any intermediate $F_i^l$, we can avoid the reverse sweep. We implement each iterative family (e.g. \Cref{alg:div_forward}) as a custom autograd function that stores only the solver's inputs and re-runs the recurrence during backward, propagating forward-mode tangents alongside the recomputed iterates. The recurrences are elementwise, so the Jacobian--vector sweep costs one extra forward pass rather than a forward and a backward pass, and the gradient through the truncated iterations is exact. Activation memory is thus constant in the iteration count. We run with mixed precision; we clamp the tangent accumulators, which off-domain rows can otherwise push past the \texttt{bf16} range.

\begin{algorithm}
    \centering
    \begin{algorithmic}
        \Procedure{Step}{$N,D,F$}
    \State $N\gets N F$;\quad $D\gets D F$
    \State \Return $N,\ D,\ 2-D$
\EndProcedure
\Statex
\Procedure{Forward}{$N,D,F,\mathbf p$}
    \State $r\gets N\,\mathbf p_0$
    \For{$i=1,\dots,|\mathbf p|$}
        \State $N,D,F\gets\Call{Step}{N,D,F}$
        \State $r\gets r+N\,\mathbf p_i$
    \EndFor
    \State \Return $r$
\EndProcedure
\Statex
\Procedure{TangentStep}{$N,D,F,\ \nu,\eta,\delta,\varphi$}
    \State $\nu\gets\nu F$;\quad $\eta\gets\eta F+N\varphi$;\quad $\delta\gets\delta F+D\varphi$
    \State \Return $\nu,\ \eta,\ \delta,\ -\delta$
\EndProcedure
\Statex
\Procedure{Backward}{$N,D,F,\mathbf p,g$}
    \Comment{$g=\partial\mathcal L/\partial r$}
    \State $\nu,\eta,\delta,\varphi\gets 1,0,1,0$
    \State $\bar N_0,\ \bar D_0,\ \bar{\mathbf p}_0\gets g\,\mathbf p_0,\ 0,\ g\,N$
    \For{$i=1,\dots,|\mathbf p|$}
        \State $\nu,\eta,\delta,\varphi\gets\Call{TangentStep}{N,D,F,\nu,\eta,\delta,\varphi}$
        \Comment{pre-step $N,D,F$}
        \State $N,D,F\gets\Call{Step}{N,D,F}$
        \State $\bar N_0\gets\bar N_0+g\,\mathbf p_i\nu$;\quad $\bar D_0\gets\bar D_0+g\,\mathbf p_i\eta$
        \State $\bar{\mathbf p}_i\gets g\,N$
    \EndFor
    \State \Return $\bar N_0,\ \bar D_0,\ \bar{\mathbf p}$
\EndProcedure
    \end{algorithmic}
\caption{Forward and backward pass of our Goldschmidt's $\widetilde f_l(x)$. Bars denote adjoints, $\bar x=\partial\mathcal L/\partial x$, with $N_0,D_0$ the inputs to \textsc{Backward}. Tangents are
$\nu=\partial N/\partial N_0$, $\eta=\partial N/\partial D_0$,
$\delta=\partial D/\partial D_0$, $\varphi=\partial F/\partial D_0$.}
    \label{alg:div_forward}
\end{algorithm}

\subsection{Parameterization of the halting distribution} \Cref{sec:method} treats each site's distribution $\mathbf{p}^s$ abstractly; in the implementation it is not stored as $n{+}1$ free probabilities but in the sequential form of \citet{banino2021pondernet}, which is what keeps it a distribution without an explicit normalization. Each site carries $n$ learnable \emph{halting logits} $\ell^s_0,\ldots,\ell^s_{n-1}$; writing $h^s_i=\sigma(\ell^s_i)=\bigl(1+e^{-\ell^s_i}\bigr)^{-1}$ for the probability of stopping at iteration $i$ \emph{given} that iteration $i$ was reached,
\begin{equation}\label{eq:halting-param}
    p^s_i \;=\; h^s_i \prod_{j<i}\bigl(1-h^s_j\bigr) \quad (i<n),
    \qquad
    p^s_n \;=\; \prod_{j<n}\bigl(1-h^s_j\bigr),
\end{equation}
\emph{i.e.}\ the deepest state absorbs the remaining mass ($h^s_n\equiv 1$), so $\mathbf{p}^s$ is a distribution over $\{0,\ldots,n\}$ by construction while the $\ell^s_i$ stay unconstrained. We evaluate \Cref{eq:halting-param} in log space, $\log p^s_i=\log\sigma(\ell^s_i)+\sum_{j<i}\log\sigma(-\ell^s_j)$, and read the last atom off as the complement $p^s_n=\bigl(1-\sum_{i<n}p^s_i\bigr)_+$. This is the product in \Cref{eq:halting-param} exactly, and the clamp only absorbs floating-point round-off. Nothing in \Cref{sec:method} depends on this choice. The expected approximation, $\mathcal{L}_{\textrm{iter}}$ and the exported mode $i^{s,\star}$ are functions of $\mathbf{p}^s$ alone, and gradients reach the $\ell^s_i$ only through it. At a site with a floor $m^s>0$ (Softmax initialization and refinement, LayerNorm-Goldschmidt $m^s{=}1$; LayerNorm-Newton $m^s{=}0$) the logits $\ell^s_0,\ldots,\ell^s_{m^s-1}$ are dropped from the recursion altogether, so $\mathbf{p}^s$ is supported on $\{m^s,\ldots,n\}$: states below the floor carry exactly zero mass, receive no gradient, and the exported index is offset by $m^s$.

\subsection{Initialization and schedule}\label{app:init} Each halting distribution spans the solver's states from a per-family floor $m^s$ up to a per-family maximum count. Each distribution is initialized by a monotone ramp on the halting logits of \Cref{eq:halting-param}, $\ell^s_i=\gamma\,(i-(n-1))$ with $\gamma{=}0.7$. $h^s_i$ is thus near zero at shallow $i$ and reaches $\tfrac12$ one step below the maximum. The resulting $\mathbf{p}^s$ is monotone non-decreasing, its two deepest states carrying equal mass, so training starts at the top of each site's support and the descent is free to settle above the calibrated count where the task demands it, as LayerNorm-Newton does at blocks $0$--$1$. A flat law spiked on the deepest state, which we tried first, leaves the shallow states without gradient and slow to reduce. During training, the inputs of every approximated operator are kept inside their calibrated domains by hard clamps carrying a quadratic push-back gradient (clamp in the forward pass, penalty gradient in the backward); the clamps are a training-time guard only, and are absent during encrypted inference (as no comparison primitive exists). At export, the deployed count of every site is the mode of its halting distribution. The prior of site $s$ is a negative-binomial law on the active support $k=i-m^s$, $p^{s,\star}_k\propto\binom{k+r-1}{k}p^{r}(1-p)^{k}$, renormalized on the finite support, with shape $r=1+(w^s+\tfrac12)\,p/(1-p)$ so that its mode sits at the target $w^s$ (a geometric law when $w^s{=}0$).

\subsection{Training configuration (\texttt{GPT-2} small)}\label{app:config}
\Cref{app:tab:config} is the recipe behind every \texttt{HEAT} row of \Cref{tab:main-results,tab:ablation}. The three phases of \Cref{sec:heat-schedule} run back to back on one \textrm{GPU} at context $1024$ on OpenWebText, with AdamW~\citep{adamw} ($\beta{=}(0.9,0.99)$, $\epsilon{=}10^{-10}$, weight decay $0.1$, gradient norm clipped at $1$). Phase one trains the weights alone under $\mathcal{L}_\textrm{task}+\lambda_\textrm{range}\mathcal{L}_\textrm{range}$, with the pretrained model as a self-distillation teacher  ($\mathcal{L}_\textrm{task}$ is the \textrm{KL} divergence to the teacher at temperature $1$, with no cross-entropy term) and the Softmax evaluated exactly under its score squeeze. Phase two starts from that point, calibrated once at $\varepsilon{=}10^{-4}$ to seed every site, releases the halting distributions and ramps $\mathcal{L}_\textrm{iter}$ in over the next $500$; at the switch the weights' learning rate steps to its phase-two value and decays by cosine to the floor over the remaining budget. Phase three freezes every count at its learned mode and continues the weights for $1000$ updates at the floor with $\lambda_\textrm{iter}{=}0$. The halting logits carry their own learning rates. The guards of $\mathcal{L}_\textrm{range}$ bound the \textrm{GELU} input at $32$, the attention scores at the squeeze bound, the Softmax denominator at $7$, the LayerNorm variances at $16$, and keep the inverse-square-root input inside its calibrated domain. The halting initialization and the clamps are those of \Cref{app:init}; validation is the mean over $32$ held-out batches every $250$ updates.
\begin{table}[t]
  \centering
  \small
  \setlength{\tabcolsep}{7pt}
    \caption{\texttt{HEAT} fine-tuning recipe for \texttt{GPT-2} small, one column per phase of \Cref{sec:heat-schedule}. $\lambda_\textrm{iter}$ ramps in over the first $500$ updates of phase two. The prior lowers its mode once the learned mode has saturated on it for $75$ micro-batch steps, at least $150$ updates apart.}
  \begin{tabular}{@{}l rrr@{}}
  \toprule
  & \textbf{phase 1} & \textbf{phase 2} & \textbf{phase 3} \\
  & squeeze & \texttt{HEAT} & cool-down \\
  \midrule
  updates                  & $500$ & $3500$  & $1000$ \\
  tokens per update        & $32768$ & $4096$  & $4096$ \\
  \midrule
  weights lr               & $10^{-4}$ & $2.5{\times}10^{-4}$ & $3{\times}10^{-5}$ \\
  halting-logit lr         & --      & $2{\times}10^{-3}$ & $0$ \\
  \midrule
  $\lambda_\textrm{range}$ & $3{\times}10^{-2}$ & $0.3$ & $0.3$ \\
  $\lambda_\textrm{iter}$  & --      & $0.6$   & $0$ \\
  score squeeze bound      & $10$    & $14$    & $14$ \\
  \midrule
  halting                  & --      & soft (\cref{eq:expected-approximation}) & hard \\
  prior $p$, start / end   & --      & $0.6$ / $0.9$ & -- \\
  back-off patience / gap  & --      & $75$ / $150$ & -- \\
  self-distillation        & yes     & --      & -- \\
  \bottomrule
  \end{tabular}
  \vspace{1ex}
  \label{app:tab:config}
\end{table}

\section{The Approximated Operators}\label{app:approx-operators}

Despite the limited operation set offered by \texttt{CKKS}, there is a well-studied variety of algorithms that, using only additions, multiplications, and rotations, allow nonlinear functions to be computed. Which algorithms to choose, how to compose them, and how to initialize their recursions determine part of the encrypted inference system. A common transformer, and especially \texttt{GPT-2}, includes three main nonlinear chokepoints that must be converted to the available operations to allow computation under \texttt{CKKS}: the Softmax, the LayerNorm, and the \textrm{GELU} activation. The Softmax and \textrm{GELU} circuits follow \texttt{THOR}~\citep{JunghoTHOR}, and the inverse-square-root circuit follows \citet{Qu2022ImprovementsRoot}; we summarize them here so the paper is self-contained, and point out where the iteration counts that \texttt{HEAT} learns sit inside each circuit. The Goldschmidt reciprocal of $d>0$ starts from a seed $y_0=\alpha-\beta d$, where $(\alpha,\beta)$ is the minimax linear initializer for the calibrated interval $[d_{\min},d_{\max}]$, and iterates $y_{i+1}=y_i\,(2-d\,y_i)$; writing $d\,y_i=1-e_i$ gives $e_{i+1}=e_i^{2}$, so each iteration squares the error. The Newton inverse square root of $z$ iterates $y_{k+1}=\tfrac12\,y_k\,(3-z\,y_k^{2})$ and converges quadratically likewise. Every learned halting site of \Cref{sec:method} wraps one of these two loops.

\paragraph{Softmax.} The circuit is the two-phase exp-and-reciprocal construction of \texttt{THOR}. Let a score row lie in the calibrated range $[a,b]$ of width $M=b-a$. The row is first shifted by the plaintext constant $(a+b)/2$: under oblivious execution no encrypted row-max is available, so the classical max-subtraction is replaced by a calibration-time midpoint. Two scaling factors $(\delta_1,\delta_2)$ govern the rest. The shifted row is scaled by $1/(\delta_1\delta_2)$, placing its entries in a window of width $M/(\delta_1\delta_2)$ narrow enough for a degree-$8$ Chebyshev fit of $e^{x}$; the fit is then raised to the $\delta_1$-th power by $k=\log_2\delta_1$ ciphertext squarings. Phase one closes by normalizing, the row is summed and multiplied by the Goldschmidt reciprocal of the sum, the Softmax-initialization site. By construction the result approximates $\operatorname{Softmax}(\mathbf{x}/\delta_2)$, and since $\operatorname{Softmax}(2\mathbf{x})_i=y_i^2/\sum_j y_j^2$ for $\mathbf{y}=\operatorname{Softmax}(\mathbf{x})$, phase two applies $\log_2\delta_2$ square-and-normalize passes, each one squaring plus one Goldschmidt reciprocal of the new sum, the Softmax-refinement site, to recover $\operatorname{Softmax}(\mathbf{x})$. \citet{JunghoTHOR} select $(\delta_1,\delta_2)$ by greedy search on a training set; our calibration performs the per-block equivalent, and this refinement loop is where the learned depth stays structured across layers (\Cref{app:fig:depth-map} and \cref{fig:softmax-gs-count}).

\paragraph{LayerNorm inverse square root.} The variance $z$ enters a calibrated domain $[z_{\min},z_{\max}]$, and the cost of evaluating $\nicefrac{1}{\sqrt{z}}$ under \texttt{CKKS} is dominated by how the Newton loop is seeded. \citet{Qu2022ImprovementsRoot} proposes two options, a Taylor's polynomial and rational initializers. We instantiate the rational variant. We fit a degree-$(3,1)$ Remez rational approximation of $\nicefrac{1}{\sqrt{z}}$ on the calibration domain, its division is realized by multiplying the numerator with the Goldschmidt reciprocal of the denominator, the LayerNorm-Goldschmidt site, and the resulting quotient seeds the Newton loop, the LayerNorm-Newton site. Both counts are learned; the seeds $(\alpha,\beta)$ and the rational coefficients are calibration state, held fixed during training and re-fitted once on the final weights.
\begin{figure}[h]
  \centering
  \includegraphics[width=\textwidth]{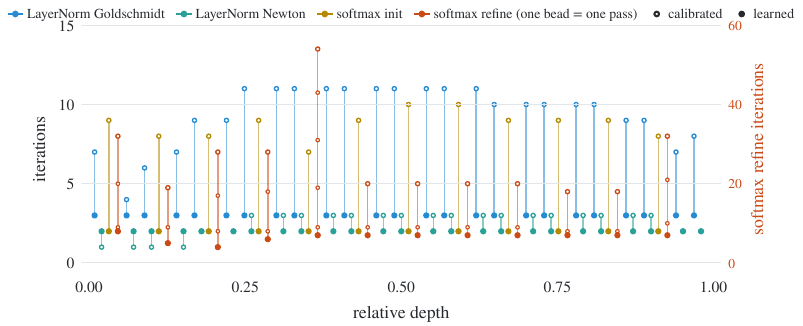}
  \caption{Per-site iteration counts, \texttt{GPT-2}: calibrated baseline (open) vs.\ \texttt{HEAT}-learned deployment (filled), one column per halting site in depth order. Family totals move $235/63/105/309 \to 75/50/24/79$ (LayerNorm-Goldschmidt/Newton, Softmax initialization/refinement), \emph{i.e.}\ $712 \to 228$ iterations per forward.}
  \label{app:fig:depth-map}
\end{figure}
\paragraph{\textrm{GELU}.} Following \texttt{THOR}~\citep{JunghoTHOR}, and unlike previous comparison-based methods~\citep{Dong2023PUMA:Minutes, Castro2025EncryptedLLM:Encryption}, the activation is a composition of two low-degree polynomials rather than a single high-degree fit, or a piecewise one. $\operatorname{GELU}(x)\approx x\,\bigl(P_2(P_1(x/S))+\tfrac12\bigr)$ with $\deg P_1=31$, $\deg P_2=27$, and $S$ a calibrated input bound, evaluated with the Paterson--Stockmeyer algorithm at deployment. The circuit is a fixed-depth polynomial with no iterative loop, which is why \textrm{GELU} carries no halting distribution and enters the training machinery only through its range guard. During \texttt{HEAT} we evaluate the analytic \textrm{GELU} and fit the polynomials directly on the fine-tuned weights.

\paragraph{Encrypted $\arg\max$.} Decoding ends with an encrypted $\arg\max$ over the vocabulary logits, following the CutMax construction of \citet{avitan2025efficientdecodingmethodslanguage}. The row is driven to a one-hot indicator of its maximum by a short sequence of \emph{sharpening} rounds. Each round re-centers and rescales the row, raises it to a small power, widening the leader's margin over the rest, and renormalizes; the normalizer is itself evaluated iteratively, by a chord-seeded Newton inverse-square-root cascade of the same kind as LayerNorm's. The deployed schedule runs five rounds (powers $9,3,3,3,7$, with one to three Newton passes each) and is derived at calibration time. Unlike the four halting families above, the $\arg\max$ carries no halting site. Its schedule is calibration state, frozen once and shared byte-identical by every approach we compare, so it contributes the same ${\approx}9.3 \stok$ to every token and none of the reported latency differences. Future work might consider including the $\arg\max$ in the optimization, despite being unstable.
\section{Additional Results}\label{app:additional-results}

\texttt{GPT-2} exposes $74$ halting sites, one per solver stage of \Cref{app:approx-operators}. LayerNorm-Goldschmidt ($25$), LayerNorm-Newton ($25$), Softmax initialization ($12$), and Softmax refinement ($12$, at $\delta_2{=}2$ for every layer after squeeze); the \textrm{GELU} is a fixed polynomial and carries no halting site. \Cref{app:fig:depth-map} compares every site's calibrated count versus what \texttt{HEAT} learns. The open marker is the calibrated baseline count, the filled marker the learned deployed count. First, we note the LayerNorm families collapse to shallow, uniform counts. Goldschmidt moves from per-site $4$--$11$ to a flat $3$, Newton from $1$--$3$ to $2$ everywhere, an \emph{increase} at blocks $0$--$1$, where the calibrator had settled on $1$. Second, Softmax refinement is the one family that allocates precision in a \emph{structured} way: block $0$ retains $8$ refinement iterations while the later blocks settle at $4$--$7$. \Cref{fig:softmax-gs-count} traces how that structure emerges during training. The layers separate early in the descent and each settles at the depth its approximation demands, rather than following a global schedule.

\subsection{Sensitivity to the seed and to the iteration penalty}\label{app:sensitivity}
We rerun the three phases of \Cref{app:tab:config} with three further seeds and, at one seed, over $\lambda_\textrm{iter}\in\{0.3,0.6,1.2\}$ crossed with the prior's end probability $p_\textrm{end}\in\{0.8,0.9,0.95\}$; everything else is unchanged. These runs are in plaintext, and their perplexity is the validation loss at context $1024$ over $160$ held-out OpenWebText batches shared by every run, not the $128$-token windows of \Cref{tab:main-results}. Every seed recovers the shipped circuit exactly, the same count at every site and hence $228$ iterations per forward, while the weights differ (perplexity $22.96\pm0.07$ over the held out samples). Over training, the four runs of the recipe, the shipped run included, step down at slightly different times but coincide from update $3325$ on, and their validation losses differ by no more than the evaluation noise (\Cref{app:fig:seed-stability}). Raising $\lambda_\textrm{iter}$ removes iterations and costs perplexity (\Cref{app:fig:lambda-sweep}). The end probability moves neither quantity by more than $3$ iterations or $0.2$ perplexity, with one exception, at $\lambda_\textrm{iter}{=}0.3$, $p_\textrm{end}{=}0.8$ barely compresses ($362$ iterations, against $240$ at $p_\textrm{end}{=}0.9$).
\begin{figure}[h]
\centering
\begin{subfigure}[t]{\figpair}
  \centering
  \includegraphics[width=\linewidth]{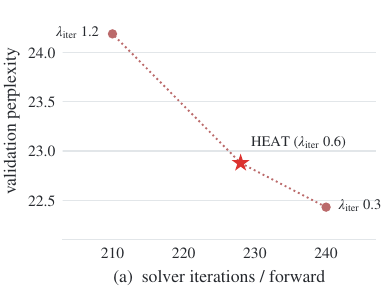}
  \phantomsubcaption\label{app:fig:lambda-sweep}
\end{subfigure}\hfill
\begin{subfigure}[t]{\figpair}
  \centering
  \includegraphics[width=\linewidth]{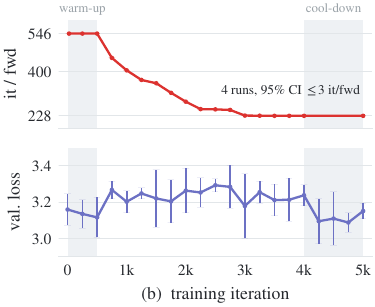}
  \phantomsubcaption\label{app:fig:seed-stability}
\end{subfigure}
\caption{(a) Raising $\lambda_\textrm{iter}$ trades solver iterations for validation perplexity (one seed, $p_\textrm{end}{=}0.9$, everything else as in \Cref{app:tab:config}); the star is the shipped setting. (b) Four runs of the shipped recipe, the shipped run and three further seeds, as mean and $95\%$ confidence interval ($n{=}4$). The iteration count follows the same descent starting from the sum of the modes of the initialized distributions; the median interval on the validation loss ($\pm0.087$) equals the one at update $0$ ($\pm0.088$), where the runs share their weights and differ only in their $32$ evaluation batches. Shaded: phase one warm-up and phase three cool-down.}
\label{app:fig:sensitivity}
\end{figure}

\subsection{Image classification task} \Cref{tab:vit-results} repeats the protocol of \Cref{tab:main-results} on encrypted \texttt{EuroSAT} classification with a \texttt{ViT-B/16} ($86$M parameters). $1024$ test images drawn at random, images whose logits do not decrypt counted as misses. \texttt{HEAT} is the fastest circuit, with $2.5\times$ fewer solver iterations and $1.8\times$ fewer bootstraps than the calibrated \texttt{ViT}, $99.4$ against $122.5$\,s per image, and $1.10\times$ faster than \texttt{ATLAS}. The gain is smaller than on \texttt{GPT-2} decode because the $80$ tokens (patches) of an image are processed at once, which moves a larger share of the per-image time into the linear families (\Cref{app:fig:hist-vit}); \texttt{HEAT} acts only on the iterative sites, so less of the wall time is within its reach. Classification is also a single forward pass. Even in this setting, \texttt{HEAT} yields the most stable circuit, with the fewest images failing during decryption.
\begin{figure}[h]
  \centering
  \includegraphics[width=0.5\textwidth]{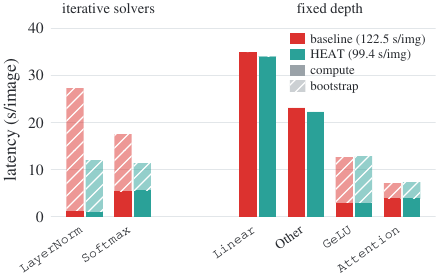}
  \caption{Per-family decode-latency composition, calibrated \texttt{ViT-B16} vs.\ \texttt{HEAT}: paired bars in absolute seconds per image, the hatched cap of each bar being the bootstrap time spent inside that family. \texttt{HEAT}'s saving concentrates the iterative methods' bars, where in this scenario the main source of latency is actual linears' computation and memory management.}
  \label{app:fig:hist-vit}
\end{figure}
\begin{table}
  \centering
  \small
  \setlength{\tabcolsep}{4pt}
  \caption{Encrypted \texttt{EuroSAT} classification at deployment (custom \texttt{A100-64GB}). $1024$ validation images drawn uniformly at random from the test set. it/fwd and bts/img: iterations of the iterative approximations and bootstraps per encrypted forward; s / image: wall time of the $12$ encoder blocks (Transf.) and end to end (e2e), the latter including the ${\approx}22$\,s head, median over the $1024$ images ($\pm$ one s.d.); fail.: images whose logits do not decrypt, counted as misses; top-1 under encryption over all $1024$ images, with the same circuit evaluated in plaintext on the full split as subscript.}
  \label{tab:vit-results}
  \begin{tabular}{@{}>{\columncolor{white}[0pt][\tabcolsep]}l cc ccc>{\columncolor{white}[\tabcolsep][0pt]}c@{}}
  \toprule
  & \multicolumn{2}{c}{circuit} & \multicolumn{2}{c}{s / image} & \multicolumn{2}{c}{$1024$ images} \\
  \cmidrule(lr){2-3}\cmidrule(lr){4-5}\cmidrule(lr){6-7}
  \textbf{System} & it/fwd\dn & bts/img\dn & Transf.\dn & e2e\dn & fail.\dn & top-1 $\texttt{FHE}_{(\textrm{plain})}$\up \\
  \midrule
  \rowcolor{gray!15}
  \,\texttt{ViT} (plaintext) & -- & -- & -- & -- & -- & 98.2 \\
  \cdashlinelr{1-7}
  \,\texttt{ViT} & 617 & 575 & 99.4$_{(\pm0.4)}$ & 122.5$_{(\pm0.6)}$ & 44/1024 & 94.4$_{(96.0)}$ \\
  \,$\mathcal{L}_\textrm{range}$ only & 529 & 464 & 88.3$_{(\pm0.3)}$ & 110.8$_{(\pm0.6)}$ & 21/1024 & 91.2$_{(93.0)}$ \\
  \,\texttt{ATLAS}~\citep{xie2026atlasautomatedapproximationtransformers} & 358 & 425 & 87.0$_{(\pm0.5)}$ & 109.7$_{(\pm1.4)}$ & 14/1024 & 96.7$_{(96.7)}$ \\
  \,\textbf{\texttt{HEAT} (ours)} & \textbf{250} & \textbf{326} & \textbf{77.2}$_{(\pm0.3)}$ & \textbf{99.4}$_{(\pm0.6)}$ & \textbf{3/1024} & \textbf{97.9}$_{(98.1)}$ \\
  \bottomrule
  \end{tabular}
\end{table}

\subsection{Downstream capability} \texttt{HEAT} changes the weights, and every perplexity in \Cref{tab:main-results,tab:ablation} is measured on OpenWebText, the fine-tuning corpus. \Cref{tab:glue-gpt2,tab:downstream-gpt2} therefore evaluate the deployed circuits in plaintext on standard zero-shot benchmarks. The calibrated circuit is within $0.1$ of the pretrained model on every metric. \texttt{HEAT} is within $1$ point on HellaSwag and PIQA and $1.8$ perplexity on C4.
\begin{table}[t]
  \centering
  \small
  \setlength{\tabcolsep}{7pt}
  \setlength{\aboverulesep}{0pt}\setlength{\belowrulesep}{0pt}
  \setlength{\extrarowheight}{2pt}
    \caption{Zero-shot GLUE accuracy for \texttt{GPT-2} small. The pretrained model, and the two deployed circuits evaluated in plaintext. \texttt{GPT-2} small sits at the majority-class floor on zero-shot GLUE, so the axis is whether \texttt{HEAT} loses anything. Its circuit matches or exceeds the calibrated one on $5/8$ tasks and on average.}
  \begin{tabular}{@{}l >{\columncolor{gray!15}}c cc@{}}
  \toprule
  & \texttt{GPT-2} & \multicolumn{2}{c}{\textbf{deployed circuit}} \\
  \cmidrule(lr){3-4}
  \textbf{Task}\,\up & (plaintext) & \texttt{GPT-2} & \textbf{\texttt{HEAT}} \\
  \midrule
  SST-2 & 55.0 & 55.0 & \textbf{56.1} \\
  MRPC & 56.1 & 56.1 & \textbf{67.4} \\
  QQP & 37.3 & \textbf{37.3} & 36.9 \\
  MNLI-m & 33.7 & 33.7 & \textbf{35.2} \\
  MNLI-mm & 33.2 & 33.2 & \textbf{34.8} \\
  QNLI & 50.2 & \textbf{50.2} & 49.5 \\
  RTE & 53.1 & \textbf{53.1} & 51.6 \\
  WNLI & 42.2 & 42.2 & \textbf{43.7} \\
  \cdashlinelr{1-4}
  \textbf{Average} & 45.1 & 45.1 & \textbf{46.9} \\
  \bottomrule
  \end{tabular}
  \label{tab:glue-gpt2}
\end{table}

\begin{table}[t]
    \centering
    \small
    \setlength{\tabcolsep}{6pt}
    \setlength{\aboverulesep}{0pt}\setlength{\belowrulesep}{0pt}
    \setlength{\extrarowheight}{2pt}
    \caption{Zero-shot downstream capability of \texttt{GPT-2} small. The pretrained model, and the deployed circuits evaluated in plaintext at their deployed iteration counts.}
    \begin{tabular}{@{}ll >{\columncolor{gray!15}}c cccc@{}}
    \toprule
    & & \texttt{GPT-2} & \multicolumn{4}{c}{\textbf{deployed circuit}} \\
    \cmidrule(lr){4-7}
    \textbf{Benchmark} & \textbf{Metric} & (plaintext) & \texttt{GPT-2} & \texttt{ATLAS} &
    \textbf{\texttt{HEAT}} & $\mathcal{L}_\textrm{range}$ only \\
    \midrule
    C4 & PPL\,\dn & 27.9 & \textbf{28.0} & 28.6 & 29.8 & 28.7 \\[2pt]
    \cdashlinelr{1-7}
    \multirow{2}{*}[-1pt]{HellaSwag} & acc\,\up & 28.9 & \textbf{28.9} & \textbf{28.9} & 28.8 & 28.7 \\
     & acc$_\textrm{norm}$\,\up & 31.1 & \textbf{31.1} & 31.0 & 30.5 & 31.0 \\[2pt]
    \cdashlinelr{1-7}
    \multirow{2}{*}[-1pt]{PIQA} & acc\,\up & 62.9 & \textbf{62.9} & 62.7 & 62.0 & 62.8 \\
     & acc$_\textrm{norm}$\,\up & 62.5 & 62.5 & \textbf{62.6} & 62.3 & \textbf{62.6} \\
    \bottomrule
    \end{tabular}

    \label{tab:downstream-gpt2}
\end{table}

\subsection{\texttt{ATLAS} replication}\label{app:atlas}
The \texttt{ATLAS} row of \Cref{tab:main-results} comes from its two-stage NSGA-II search, run with the released driver, operators and budget. We adapt the encoding to our circuit. The decision vector is the iteration counts, \emph{i.e.}\ per LayerNorm, Goldschmidt $\in[1,13]$ and Newton $\in[0,3]$; per Softmax, the Goldschmidt initialization $\in[2,9]$ and the refinement count $\in[1,12]$. Everything else stays at its calibrated value, \emph{e.g.}\ the fitted constants, the number of refinement passes, the exponential and \textrm{GELU} polynomials, and \texttt{CutMax}. Changing any of these requires re-fitting the constants, which is a new calibration rather than a search. \texttt{ATLAS}'s activation-degree axis has no counterpart because \textrm{GELU} places no bootstrap in any of our plans. The objectives are the total approximation depth, with levels per iteration measured on our circuit, and the mean absolute error of the last block's hidden states against the exact pretrained \texttt{GPT-2}, on $4\times512$ OpenWebText tokens in place of \texttt{ATLAS}'s $10$ sentences. We set the feasibility bound to $\varepsilon{=}1.0$, that admits every circuit within $2\%$ of the calibrated perplexity, and, as in \texttt{ATLAS}, the task re-evaluation makes the final choice. Stage one searches a single configuration shared by all layers as in the released code, with population $48$ for $50$ generations. Stage two, seeded with stage one's Pareto set, searches per layer with population $96$ for $225$ generations. The total is $24{,}144$ evaluations, that take ${\approx}46$ minutes on one \textrm{GPU}. We re-evaluate all $96$ points returned by stage two by plaintext perplexity on $8\times1024$ OpenWebText tokens, and keep the cheapest within $2\%$ of the calibrated circuit. Over three seeds, the cheapest such points sit at $438$, $468$ and $522$ iterations per forward; \Cref{tab:main-results} reports the cheapest. It cuts approximation depth from $838$ to $454$ ($-46\%$, against the ${\approx}35\%$ \texttt{ATLAS} reports on its own baseline) and iterations from $712$ to $438$.

\subsection{Where the encrypted drift comes from}\label{app:drift}
A bootstrap adds error to every slot, whatever the slot holds. We inject into plaintext models the error measured at the scheduled bootstraps of \Cref{tab:main-results}. The calibrated circuit loses $34\%$ perplexity with its KL rising $0.001\to0.45$, while the \texttt{HEAT} one loses nothing under the same noise (\Cref{fig:drift-noise}). Injecting the noise selectively to one approximation at the time identifies the Softmax as the main noise amplification and thus source of drift. The source of this is the refinement of \Cref{app:approx-operators}. Each square-and-normalize pass amplifies the relative error of its rows. The calibrated circuit's two to five passes per block push it well above $50\%$ by the end of the decode, whereas \texttt{HEAT}, whose halved score ranges and reduced iterations lets the calibrator use a single shallow pass, stays below $10\%$ (\Cref{fig:drift-chain}). The \texttt{ATLAS} circuit of \Cref{tab:main-results} inherits the calibrated refinement unchanged and, under the same injected noise, drifts like the calibrated circuit (\Cref{fig:drift-noise}). The injected noise gives between $2$ and $43$ collapsed windows of $64$ depending on the assumed noise floor, thus including the $13$ registered under actual \texttt{FHE} inference. The measured error is zero-mean with $\sigma=1.7\times10^{-4}$ per slot, independent of the plaintext amplitude up to ${\approx}30$; we resample it at every scheduled bootstrap, on $64$ held-out windows with three seeds, and fit no parameter to the encrypted runs. A search scored under such a noise model might close part of this gap; however it would be a new method rather than a baseline.
\begin{figure}[t]
    \centering
    \includegraphics[width=\figsingle]{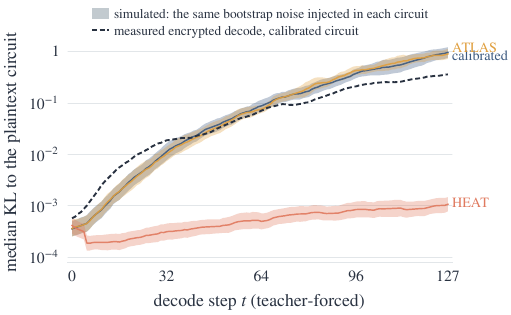}
    \caption{Median KL to the plaintext circuit when the same measured bootstrap noise is injected into the calibrated circuit, into \texttt{ATLAS} and into \texttt{HEAT}, over $64$ held-out OpenWebText windows of $128$ positions, disjoint from the data the \texttt{ATLAS} search used. Dashed: the measured encrypted decode of the calibrated circuit.}
    \label{fig:drift-noise}
\end{figure}
\begin{figure}[t]
    \centering
    \includegraphics[width=\figsingle]{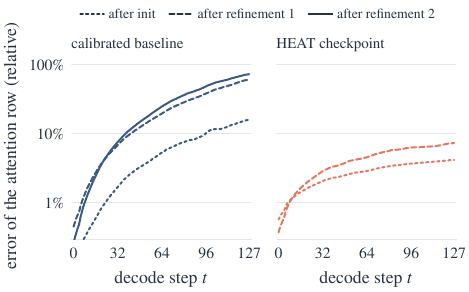}
    \caption{Relative error of the attention rows after each stage of the Softmax chain, noisy vs.\ noise-free. Left: calibrated circuit, two refinement passes. Right: \texttt{HEAT}, one.}
    \label{fig:drift-chain}
\end{figure}
\paragraph{Calibration tolerance.} Bootstrap can restore multiplicative depth while maintaining roughly $13$ bits of precision for the underlying payload, and thus a reasonable calibration choice would be $\varepsilon{=}10^{-4}$ as the target. This ensures the encrypted circuit operates at the maximum precision possible. \cref{tab:tolerance-ladder} shows what happens to the same pretrained models where the target is moved towards lower precision. None of the columns is alone enough to decide which is better as multiple factors must be taken into careful consideration. The plaintext perplexity of the circuit is roughly monotone with the precision, as one would expect; the encrypted circuit however has to be weighed between two factors: the collapsed rate and the top-1 accuracy. The first leaves only two candidates, namely $\{10^{-3}, 10^{-4}\}$, however the \texttt{FHE} perplexity of the $\varepsilon{=}10^{-3}$ option is unstable and far from its plaintext twin leaving only $10^{-4}$, \emph{i.e.}\ \texttt{FIDESlib} bts precision on the table.
\begin{table}
    \centering
    \small
    \setlength{\tabcolsep}{3pt}
    \caption{The calibrated circuit at four approximation tolerances, protocol and columns of Table~\ref{tab:main-results}. Loosening the tolerance buys iterations and time but not stability or fidelity. The ladder is not monotone (the $10^{-3}$ circuit collapses least and drifts most, the $10^{-1}$ circuit the reverse), every step keeps at least $2$ collapsed sequences and a $25$-$66\%$ perplexity loss against its own plaintext circuit, and none reaches \texttt{HEAT} on any column.}
    \begin{tabular}{@{}l cc cc c c c@{}}
    \toprule
    & \multicolumn{2}{c}{\textbf{circuit}\,\dn} & \multicolumn{2}{c}{\textbf{s/token}\,\dn} & \multicolumn{2}{c}{\textbf{fidelity}} & \textbf{perplexity}\,\dn \\
    \cmidrule(lr){2-3}\cmidrule(lr){4-5}\cmidrule(lr){6-7}\cmidrule(lr){8-8}
    \textbf{Calibration tolerance ($\varepsilon$)} & it/fwd & bts/tok & Transf. & \textbf{e2e} & coll.\,\dn & top-1\,\up & $\texttt{FHE}_{(\textrm{plain})}$ \\
    \midrule
    $10^{-1}$ & 571 & 418 & 36.6$_{(\pm0.3)}$ & 45.9$_{(\pm0.4)}$ & 16/64 & 60.9$_{(\pm3.4)}$ & 41.5$_{(33.1)}$ \\
    $10^{-2}$ & 642 & 456 & 40.1$_{(\pm0.3)}$ & 49.4$_{(\pm0.4)}$ & 14/64 & 60.1$_{(\pm2.9)}$ & 54.1$_{(32.6)}$ \\
    $10^{-3}$ & 678 & 474 & 41.5$_{(\pm0.3)}$ & 50.8$_{(\pm0.4)}$ & 2/64  & 70.5$_{(\pm1.7)}$ & 52.7$_{(31.7)}$ \\
    $10^{-4}$ (Table~\ref{tab:main-results}) & 712 & 513 & 45.0$_{(\pm0.3)}$ & 54.3$_{(\pm0.4)}$ & 7/64 & 70.3$_{(\pm2.6)}$ & 44.6$_{(31.7)}$ \\
    \cdashlinelr{1-8}
    \textbf{\texttt{HEAT}} & \textbf{228} & \textbf{326} & \textbf{28.2}$_{(\pm0.2)}$ & \textbf{37.6}$_{(\pm0.3)}$ & \textbf{0/64} & \textbf{82.9}$_{(\pm0.5)}$ & \textbf{30.7}$_{(31.0)}$ \\
    \bottomrule
    \end{tabular}
    \label{tab:tolerance-ladder}
\end{table}

\subsection{Scaling to larger models}
\texttt{HEAT} optimizes the circuit and it decides how many Goldschmidt, Newton and Softmax refinement steps every site needs, and that count is what we measure. Whether such a circuit is \emph{affordable} under encryption at a larger size is a different question, dominated by costs that \texttt{HEAT} cannot optimize, like key material, the weight plaintexts and their movement, and the packing tier, all of which grow with the width of the model (the 1280-wide residual stream of \texttt{GPT-2-large}, for instance, has to be padded to a 2048-slot layout). These are implementation problems on an axis orthogonal to the scope of this paper, which we don't explore. What can be verified without encryption is whether the iteration cut survives scale, because the plaintext simulation \emph{is} the deployed circuit and on \texttt{GPT-2} the two agree (\cref{sec:experiments}). \cref{tab:scaling} applies the recipe unchanged, apart from a smaller phase-2 learning rate on the largest model, to the \texttt{GPT-2} family up to 774M parameters, with perplexity measured on the same OpenWebText validation slice at full context (hence not comparable with the first-128-token numbers of the main matter). The cut is nearly flat with $3.1\times$, $3.0\times$ and $2.8\times$ fewer iterations per forward pass than the calibrated circuit, at a perplexity between $-2.2\%$ and $+4.6\%$ of the calibrated baseline and at half the cost of the range-regularized (squeezed) circuit. Larger models keep the lower perplexity floor their size affords, so the saving is not bought with the model's capacity.

\begin{table}[t]
  \centering
  \small
  \setlength{\tabcolsep}{5pt}
  \setlength{\aboverulesep}{0pt}\setlength{\belowrulesep}{0pt}
  \setlength{\extrarowheight}{2pt}
    \caption{Applying the \texttt{HEAT} fine-tuning paradigm to the whole \texttt{GPT-2} family up to 774M parameters confirming an almost flat $3\times$ iterations gain while retaining performance. (Bigger models aren't tested in the \texttt{FHE} inference but just the plaintext scenario).}
  \begin{tabular}{@{}ll >{\columncolor{gray!15}}r >{\columncolor{gray!15}}r rr rr r@{}}
  \toprule
  & & \multicolumn{2}{c}{\cellcolor{white}\texttt{calibrated}} & \multicolumn{2}{c}{squeezed} & \multicolumn{2}{c}{\textbf{\texttt{HEAT}}} & \\
  \cmidrule(lr){3-4}\cmidrule(lr){5-6}\cmidrule(lr){7-8}
  \textbf{Model} & \textbf{params} & it/fwd\,\dn & ppl\,\dn & it/fwd\,\dn & ppl\,\dn & it/fwd\,\dn & ppl\,\dn & \textbf{reduction} \\
  \midrule
  \texttt{GPT-2}        & 124\,M & 712  & 23.72 & 455  & 24.19 & \textbf{228} & \textbf{23.19} & \textbf{3.12$\times$} \\
  \texttt{GPT-2-medium} & 355\,M & 1319 & 16.76 & 800  & 17.05 & \textbf{441} & \textbf{16.94} & \textbf{2.99$\times$} \\
  \texttt{GPT-2-large}  & 774\,M & 1897 & 15.17 & 1278 & 15.86 & \textbf{672} & \textbf{15.87} & \textbf{2.82$\times$} \\
  \bottomrule
  \end{tabular}

  \label{tab:scaling}
\end{table}

\subsection{Free-running encrypted generation}\label{app:free-running}
Every fidelity number in the discussion is teacher-forced, \emph{i.e.}\ each position receives the ground-truth token, so an encrypted error cannot propagate to the next step. We therefore also run the deployed \texttt{HEAT} circuit the way a server would, autoregressively and fully under encryption. The prompt is encrypted and processed token by token through the decode path of \Cref{tab:main-results}. From then on, each token is selected by the encrypted $\arg\max$ of \Cref{app:approx-operators}, whose one-hot output is mapped homomorphically to its token and positional embedding and fed back, after a bootstrap, as the next input. \Cref{app:fig:free-running} compares $32$ encrypted tokens per prompt with the greedy decoding of the same circuit (the \texttt{HEAT} weights at the learned iteration counts) evaluated in plaintext. On three of the four prompts the two are identical token for token. On the fourth they branch at the third token, where the plaintext circuit ranks ``the'' over ``a'' by $0.005$ in logit and encryption swaps them, and both continuations remain fluent. Scored step by step against the plaintext circuit run on the encrypted prefix, the encrypted $\arg\max$ agrees at $126$ of $128$ steps, both misses being near-ties of this kind, with a median \textrm{KL} of $5\cdot10^{-4}$ to $8\cdot10^{-3}$ per prompt. For reference, the greedy continuation of the pretrained \texttt{GPT-2} departs from \texttt{HEAT}'s within the first seven tokens of every prompt, at the same token whether \texttt{HEAT} runs encrypted or in plaintext meaning that the text changes with the fine-tuning, not with the encryption. Greedy decoding of a model of this size falls into repetition within a sentence or two, and the classic remedies, repetition penalties and sampling, are not yet available for encrypted free-running generation, whose every token is the encrypted $\arg\max$. We therefore choose not to display the repetitions and cut each continuation where it starts repeating, while every statistic above is computed over all $32$ tokens. The length of decoding is fixed in advance as, by default, encrypted generation runs for a pre-set number of steps, since the server never sees the generated tokens and cannot stop on an end-of-sequence token.
\begin{figure}[h]
  \centering
  \small
  \newcommand{\nl}{{\color{gray}$\hookleftarrow$}\,}
  \newcommand{\meta}[1]{{\footnotesize\color{gray!80!black}#1}}
  \newcommand{\cut}{{\color{gray!80!black}\,[\ldots]}}
  \newcommand{\genbox}[5]{\fcolorbox{gray!70}{gray!5}{\parbox[t][#1][s]{\dimexpr0.333\linewidth-2\fboxsep-2\fboxrule-0.5em\relax}{\raggedright\textbf{#2}\par\smallskip\meta{\texttt{HEAT}, encrypted $=$ plaintext}\\\textcolor{cplx}{#3}\par\medskip\meta{pretrained \texttt{GPT-2}}\\\textcolor{gray}{#4}\par\vfill\meta{#5}}}}
  \genbox{6.4cm}{The capital of France is}{the capital of the French Republic. The French Republic is the largest state in the world.\cut}{the capital of the French Republic, and the capital of the French Republic is the capital of the French Republic.\cut}{agree $32/32$ \,$\cdot$\, \textrm{KL} $0.0023$}\hfill
  \genbox{6.4cm}{Recently, scientists have discovered that}{the brain is a complex system of neurons, which are connected to the brain through the spinal cord.\nl\nl The researchers found that\cut}{the brain's ability to process information is impaired when it is exposed to a high-energy source of energy.\nl\nl The findings, published in the journal Nature}{agree $32/32$ \,$\cdot$\, \textrm{KL} $0.0014$}\hfill
  \genbox{6.4cm}{The best way to learn a new language is}{to learn it from a teacher.\cut}{to learn it from a native speaker.\cut}{agree $32/32$ \,$\cdot$\, \textrm{KL} $0.0005$}\\[0.6em]
  \fcolorbox{gray!70}{gray!5}{\parbox{\dimexpr\linewidth-2\fboxsep-2\fboxrule\relax}{\raggedright\textbf{It should also be noted that early on in this case, Hotfile}\par\smallskip
    \begin{minipage}[t]{0.32\linewidth}\raggedright\meta{\texttt{HEAT}, encrypted}\\\textcolor{cplx}{was not \underline{a} legitimate source for the files.\cut}\end{minipage}\hfill
    \begin{minipage}[t]{0.32\linewidth}\raggedright\meta{\texttt{HEAT}, plaintext}\\\textcolor{gray}{was not \underline{the} only site that was targeted by the FBI. The FBI also targeted the site of the same name,\cut}\end{minipage}\hfill
    \begin{minipage}[t]{0.32\linewidth}\raggedright\meta{pretrained \texttt{GPT-2}}\\\textcolor{gray}{was not a file server. It was a file server that was used to store files on the server.\cut}\end{minipage}\par\medskip
    \meta{branches from its plaintext at a near-tie \,$\cdot$\, agree $30/32$ \,$\cdot$\, \textrm{KL} $0.0076$}}}
  \caption{Free-running generation of the deployed \texttt{HEAT} circuit under \texttt{FHE}, $32$ greedy tokens per prompt. The prompt (black) is encrypted, and every token \texttt{HEAT} generates (blue) is chosen by the encrypted $\arg\max$ and fed back without decryption. In gray it's the plaintext greedy continuations, of the pretrained \texttt{GPT-2} for reference.}
  \label{app:fig:free-running}
\end{figure}

\section{Societal Impacts}\label{app:societal}
\texttt{FHE} brings an additional dimension to the societal implications of deploying modern deep learning systems. While privacy-preserving inference can provide substantial benefits by reducing the exposure of sensitive user data, the privacy guarantees offered by \texttt{FHE} may also change how such systems can be deployed and used in practice. Deep learning models already have broad societal benefits, while at the same time lowering the barriers to a range of potentially harmful applications. Consequently, the implications of combining increasingly capable models with strong cryptographic privacy guarantees warrant careful consideration.

In particular, there remains a gap between the privacy properties demonstrated by \texttt{FHE} protocols in controlled settings and those of complete systems deployed in practice. Practical deployments involve additional components, assumptions, and operational choices that may affect the resulting privacy and security properties. Moreover, naively incorporating \texttt{FHE} into an existing inference pipeline does not necessarily provide the same guarantees as a carefully designed end-to-end private system. We therefore believe that the broader societal implications of deploying \texttt{FHE}-based inference systems, including both their potential benefits and unintended consequences, deserve further investigation.

\end{document}